\DeclareSymbolFont{bbold}{U}{bbold}{m}{n}
\DeclareSymbolFontAlphabet{\mathbbold}{bbold}
\def\eE{\mathbb E}
\def\bhcX{\mathcal{X}^n}
\def\bhcY{\mathcal{Y}^n}
\def\bhcS{\mathcal{S}^n}
\def\sCapa{\mathcal C_{\text{est}}^{\text{mac}}(P_1,P_2)}
\def\sCapb{\mathcal C_{\text{3user}}(P_1,P_2,P_3)}
\newcommand\independent{\protect\mathpalette{\protect\independenT}{\perp}}
\newcommand{\etal}{\textit{et al.}}
\def\independenT#1#2{\mathrel{\rlap{$#1#2$}\mkern2mu{#1#2}}}
\newtheorem{theorem}{Theorem}
\newtheorem{corol}[theorem]{Corollary}
\newtheorem{lemma}[theorem]{Lemma}
\newtheorem{defn}[theorem]{Definition}
\newtheorem{remark}[theorem]{Remark}
\newcommand{\be}{\begin{equation}}
\newcommand{\ee}{\end{equation}}
\newcommand{\ben}{\begin{equation*}}
\newcommand{\een}{\end{equation*}}
\newcommand{\ba}{\begin{eqnarray}}
\newcommand{\ea}{\end{eqnarray}}
\DeclareMathOperator{\Var}{Var}
\DeclareMathOperator{\Cov}{Cov}
\newcommand{\vZ}{\sigma_Z^2}
\begin{document}

\title{Joint State Estimation and Communication over a State-Dependent Gaussian Multiple Access Channel}

\author{Viswanathan~Ramachandran,
        Sibi Raj B.~Pillai,~\IEEEmembership{Member,~IEEE,}
        and~Vinod M.~Prabhakaran,~\IEEEmembership{Member,~IEEE}
\thanks{Manuscript submitted on xxx. The work was supported in part by the Bharti Centre for
Communication, IIT Bombay and the grant 17ISROC008 from the ISRO-IITB Space Technology Cell. The material in
this paper was presented in part at the 2018 Twenty-fourth National Conference on Communications (NCC), Hyderabad, India. 
}        
\thanks{V. Ramachandran and S.R.B. Pillai are with the Department of Electrical Engineering, Indian Institute of Technology Bombay,  Mumbai, India. V.M. Prabhakaran is with the School of Technology and Computer Science, Tata Institute of Fundamental Research, Mumbai, India.}}


\maketitle

\begin{abstract}
A hybrid communication network with a common analog signal and  an
independent digital data stream as input to each node in a multiple access network
is considered. The receiver/base-station
has to estimate the analog signal with a given fidelity, and decode the
digital streams with a low error probability. Treating the analog signal
as a common state process, we set up a joint
state estimation and communication problem in a Gaussian multiple access
channel (MAC) with additive state. The transmitters have non-causal knowledge of 
the state process, and need to communicate independent data streams in addition
 to facilitating state estimation at the receiver.  We first 
provide a  complete characterization of the optimal trade-off between mean squared error distortion 
performance in estimating the state and the data rates for 
the message streams from two transmitting nodes. This is then generalized
to an $N-$sender MAC.  To this end, we show a natural connection between 
the state-dependent MAC model and a hybrid multi-sensor network in which a 
common source phenomenon is observed at $N$ transmitting nodes. Each node 
encodes the source observations as well as an independent message stream 
over a Gaussian MAC without any state process. The reciever is interested 
estimating the source and all the messages. Again the distortion-rate 
performance is characterized.
\end{abstract}

\begin{IEEEkeywords}
Multiple Access Channel, Gelfand-Pinsker, Dirty Paper coding, State Amplification, MMSE Estimation, Uncoded Communication.
\end{IEEEkeywords}

%
\IEEEpeerreviewmaketitle

\section{Introduction}
\label{sec:intro}
Hybrid digital radio systems~\cite{wiki:xxx},\cite{wang2016simultaneous}, involving analog and digital information superposed in the same communication signal, are getting increasingly popular nowadays. In these systems, the receiver must estimate the analog signal, while also decoding the digital information. Such systems can be modelled as state-dependent channels~\cite{el2011network}, where the transmitter aids the receiver in estimating the channel state while also conveying a stream of messages~\cite{sutivong2005channel}. This is a simultaneous estimation and communication problem. For an additive channel, when the channel noise and the state are independent Gaussian processes, Sutivong \etal~\cite{sutivong2005channel} established the optimal trade-off between the mean squared error distortion in estimating the state and the communication rate in a point-to-point setting. Joint state estimation and communication is also relevant in the context of multi-user networks with several sensor nodes observing a common phenomenon. The base station/receiver is interested not only in the source process but also in the data from each node, this is the topic of this work.

Channels with state are used to model situations in which the channel statistics are 
controlled by an external random process, known as the state process. The state process may be known 
either at the encoder, decoder, or both.
The encoder state information can be either causal or non-causal (i.e. the entire 
state sequence is known a priori). Seminal papers by Shannon~\cite{shannon1958channels} (causal case) and Gelfand \& Pinsker~\cite{gel1980coding} (non-causal case) introduced state-dependent models. 
The latter model was motivated by coding for memory with defects, 
first studied in \cite{kuznetsov1974coding}. Costa~\cite{costa1983writing} introduced the notion 
of \emph{dirty paper coding} (DPC) for a state-dependent AWGN channel with non-causal state knowledge at the encoder, wherein the surprising conclusion that the capacity is unchanged by the presence of the state was arrived at. DPC later found extensive applications in broadcast settings, leading to the solution of the MIMO broadcast capacity region~\cite{weingarten2006capacity}. 

In certain state dependent channels, the transmitter may wish to aid the receiver in estimating the channel state, in addition to communicating messages. 
For a point-to-point AWGN channel with additive state, splitting the available average power between 
uncoded transmission of the state and DPC for the message was found to be optimal
 for the mean squared error distortion measure~\cite{sutivong2005channel}. 
There is only limited success in extending this result to other models. For example,
the discrete memoryless counterpart of the state estimation problem was analyzed by Kim \etal~\cite{kim2008state}, for a restricted setting where the distortion is measured in terms of the~\textit{state uncertainty reduction rate}. Under non-causal state knowledge at the encoders, there is no known network setting where the joint state estimation and communication  trade-off is completely available, to the best of our knowledge.
The current paper solves this for the AWGN MAC setting. We now briefly mention some of the relevant contributions in the literature.

Following \cite{sutivong2005channel}, the idea of channel state amplification has been studied in several network information theoretic settings. In \cite{zhao2014capacity}, the problem of communicating a common source and two independent messages over a Gaussian broadcast channel (BC) without state was analyzed, and it was shown that a power splitting strategy to meet the two different goals is not optimal. The problem of communicating channel state information over a state dependent discrete memoryless channel with causal state information at the encoder was analyzed in \cite{choudhuri2013causal}. More recently, it was shown in \cite{bross2016conveying} that for simultaneous message and state communication over memoryless channels with memoryless states, feedback can improve the optimal trade-off region both for causal and strictly-causal encoder side information. The dual problem of \cite{sutivong2005channel}, known as \emph{state masking}, in which the transmitter tries to conceal the state from the receiver was studied in \cite{merhav2007information}. In \cite{koyluoglu2016state}, a state-dependent Gaussian BC was considered with the goal of amplifying the channel state at one of the receivers while masking it from the other receiver, with no message transmissions. \cite{liu2009message} gave inner bounds for simultaneous message transmission and state estimation over a 
state-dependent Gaussian BC. For message communication and state masking over a discrete memoryless BC, inner and outer bounds were derived in \cite{dikshtein2018broadcasting}.

The main concern of this paper is state estimation and communication in a scalar 
Gaussian multiple access setting. We will present a model with two senders first. 
For the model shown in Fig. \ref{fig:mac}, a common
additive independent and identically distributed (IID) Gaussian state-process affects the transmissions from both senders.
\begin{figure}[h]
\begin{center}
\includegraphics[scale=0.5]{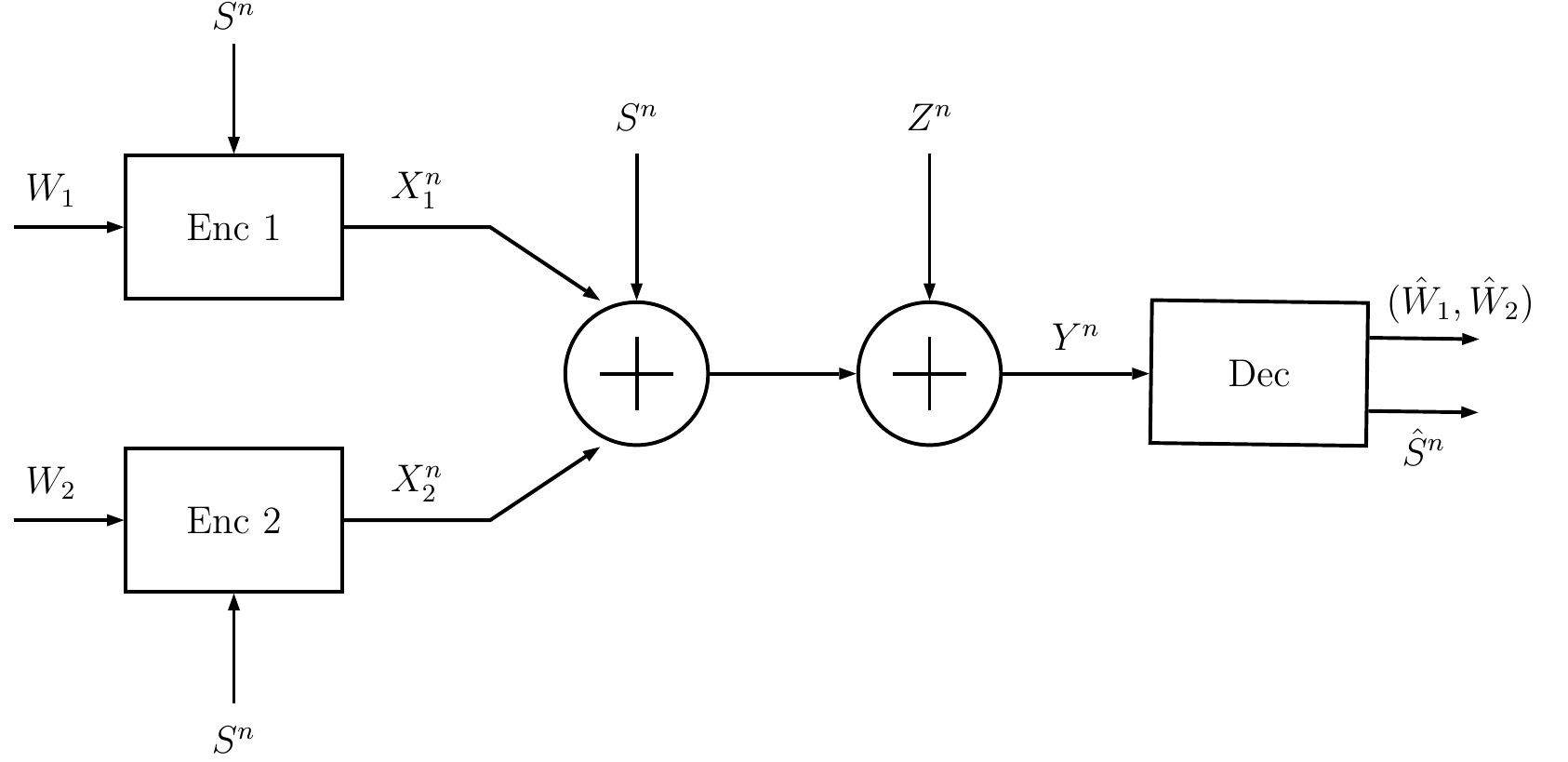}
%
%
%
%
%
%
\end{center}
\caption{Dirty Paper MAC with State Reconstructions\label{fig:mac}}
\end{figure}
The transmitters know the state-process in a non-causal fashion.
Our first objective is to obtain an estimate of the state process at the receiver to within a prescribed distortion bound. 
In addition, there is a message stream from each encoder to the receiver. Given a rate pair for their respective private messages, the transmitters attempt to minimize the distortion incurred in state estimation at the receiver. Under individual average transmit power constraints at the encoders, the 
Gaussian state dependent MAC with state estimation requirement leads to interesting trade-offs 
between the achievable distortion and the rates. 
We name this model as the \emph{dirty paper MAC with state estimation}. 

In the absence of a state process, the AWGN MAC capacity  can be seen as a natural extension
of the point to point AWGN model~\cite{cover2012elements}. When the state is present, it 
is tempting to look for such an extension of the joint state estimation and communication tradeoff,
using the single user  results in \cite{sutivong2005channel}. However, notice that the former
connection is greatly aided by the polymatroidal capacity region of a Gaussian MAC (GMAC). Essentially
three inequalities suffice to establish the converse result for a two user MAC~\cite{cover2012elements}.
On the other hand, for joint estimation and communication in a two user MAC, 
even the cross-section of the optimal rate-region under a given distortion is not always a polytope.
This explains why single user  techniques are not enough in our setup.
Our main contributions are summarized below.
\begin{itemize}
\item We provide a complete characterization of the optimal trade-off between joint state estimation and 
communication over a two user dirty paper MAC with state estimation. 
\item For a multi-sensor network of nodes observing a common source phenomenon, with each node 
possibly having an additional independent message stream, we characterize the optimal distortion-rate 
performance in joint state-estimation and communication over a Gaussian MAC without state. 
The model is sufficiently general to  include cases where the source symbols also act as 
additive state, which is known non-causally at the transmitters. 
%
%
\end{itemize}
The transmission of correlated sources through a MAC is a very important open problem in literature~\cite{coversalehi80,dueck81}. Our problem is related, but an extreme case called
the cooperative MAC~\cite{coversalehi80}, where the source observation is common to all the transmitters.
In our model, we are only constraining the reconstruction fidelity, whereas \cite{coversalehi80} considers
the lossless case. In some sense, the problem which comes closest to the one here is the source estimation 
problem in \cite{gastpar08}. Here $N$ transmitters in a Gaussian MAC observe independent noisy versions 
of a single source. The transmitters are assumed to be \emph{symmetric}, i.e. they
have identical power constraints, and the same noise variance in the source observations. 
The channel state process is completely absent, but uncoded transmissions
turn out optimal. Some relaxations on symmetric users are provided in 
\cite{tian2015,tian2017matched}. In fact \cite{tian2015} also considers the scalar single user model with 
additive state non-causally known at the transmitter, and shows that uncoded transmission is optimal 
for state estimation in the Gaussian setting. However, even for the point to point system, optimal communication schemes in
presence of additional messages are  unknown when the source observations are noisy~\cite{tian2015}. Other relevant studies regarding communication of sources over a MAC include \cite{xiao2007multiterminal} (distributed correlated sources over an orthogonal MAC), \cite{lapidoth2010sending} (bivariate Gaussian source over GMAC with individual distortion constraints), \cite{nazer2007computation} (reliable function computation over MAC) and \cite{soundararajan2012communicating} (linear functions of correlated sources over GMAC).
Notice that none of these models consider additional data streams along with source communications over MACs. For noiseless source observations, \cite{sutivong2005channel} characterizes the optimal
tradeoff for message as well as rate, the current paper extends this to a $N$-sender GMAC, with or without state.



\textit{Notations:} We use $\mathbb{P}(\cdot)$ to denote the probability of an event, and $\mathbb{E}[\cdot]$ to denote the expected value of a random variable. All logarithms in this paper are to the base $2$, unless specified otherwise. We denote random vectors as $U^n:= U_1, \cdots, U_n$ and $U_1^n:= U_{11},\cdots, U_{1n}$. Calligraphic letters represent the alphabets. $\|.\|$ denotes the Euclidean norm of a vector.

The paper is organized as follows: we introduce the system model and main results
in Section~\ref{sec:sys}. Sections~\ref{sec:achieve} and \ref{sec:conv} respectively contain the achievable coding scheme  and converse to the optimal region. Section \ref{sec:Nuser} considers the generalization to $N$ transmitters over a  GMAC, with and without state. Concluding remarks are given in Section~\ref{sec:concl}.

\section{System Model and Results}\label{sec:sys}
The dirty paper MAC with state estimation is shown in Fig.~\ref{fig:mac}. Here $S \sim \mathcal{N}(0,Q)$ is the channel state and $Z$ is the channel noise, with $S \independent Z$. The state and noise processes are i.i.d., with the state being non-causally available at both encoders. The receiver observes (for a single channel use)
\begin{equation}
Y = X_1+X_2 + S + Z, \label{eq:syst}
\end{equation}
where $Z \sim \mathcal N(0,\vZ)$. After $n$ observations, the decoder  estimates 
$\hat{S}^n = \phi(Y^n)$ using
a reconstruction map $\phi(\cdot):\mathcal Y^n \rightarrow \mathbb R^n$, 
and also decodes the independent messages $(W_1, W_2)$, which are assumed be independent of $S^n$.
We also take  $W_j$ to be uniformly drawn from $\{1,\cdots,2^{nR_j}\}$ for  $j=1,2$.

Our objective here is to maintain the distortion below a prescribed value, 
while ensuring that the average error probability of decoding the messages is  small enough, i.e. 
\begin{align}
\frac{\eE{[\|S^n-\phi(Y^n)\|^2]}}{n} &\leq D+\epsilon, \label{eq:dist:const} \\ 
\mathbb{P}(\psi(Y^n) \neq (W_1,W_2)) & \leq \epsilon  \label{eq:error:const}.
\end{align}
Here $\psi:\bhcY \to \{1,\cdots, 2^{nR_1}\} \times \{1,\cdots, 2^{nR_2}\}$ is the decoding map, $D$ represents the distortion target, and $\epsilon > 0$ the probability of error target.

\begin{defn}
A scheme achieving \eqref{eq:dist:const} -- \eqref{eq:error:const} using 
the encoder maps $\mathcal{E}_j:\{1,\cdots,2^{nR_j}\} \times \bhcS \to \bhcX_j$ 
such that $\eE \|X_j^n\|^{2} \leq nP_j$, $j=1,2$,
along with two maps $\phi(\cdot)$ and $\psi(\cdot)$ at the receiver is 
called an $(n,R_1,R_2, D,\epsilon)$ communication scheme. 
%
\end{defn}

We say that a triple $(R_1,R_2,D)$ is achievable if a $(n,R_1,R_2,D,\epsilon)$
communication scheme exists for every $\epsilon > 0$, possibly by taking $n$ large enough. Let 
$\sCapa$ be the closure of the set of all achievable $(R_1,R_2,D)$ triples, with
$0 \leq D \leq Q$. Our main result is stated below.
\begin{theorem} \label{thm:main}
For the dirty-paper MAC with state, the optimal trade-off region $\sCapa$ is given by the convex closure of all $(R_1,R_2,D) \in \mathbb{R}_{+}^{3}$ such that
\begin{gather}
R_1 \leq \frac{1}{2}\log\left(1+\frac{\gamma P_1}{\vZ}\right), \label{eq:rc1}\\
R_2 \leq \frac{1}{2}\log\left(1+\frac{\beta P_2}{\vZ}\right), \label{eq:rc2}\\
R_1+R_2 \leq \frac{1}{2}\log\left(1+\frac{\gamma P_1+\beta P_2}{\vZ}\right), \label{eq:rc3}\\
D \geq \frac{Q(\vZ\!+\!\gamma P_1\!+\!\beta P_2)}{P_1\!+\!P_2\!+\!Q\!+\!\vZ\!+\!2\sqrt{\bar{\gamma}P_1 Q}\!+\!2\sqrt{\bar{\beta}P_2 Q}\!+\!2\sqrt{\bar{\gamma}\bar{\beta}P_1 P_2}}. \label{eq:distc}
\end{gather}
for some $\gamma \in [0,1]$ and $\beta \in [0,1]$, with $\bar{\gamma}=1-\gamma$ and $\bar{\beta}=1-\beta$.  
\end{theorem}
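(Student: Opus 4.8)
\textit{Overview.} I would prove the achievability and the converse separately; the converse is the substantive part.

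\textit{Achievability.} The natural scheme is a power split at each node between amplifying the state and dirty‑paper coding the message. Let encoder~$1$ send $X_1^n=\sqrt{\bar\gamma P_1/Q}\,S^n+X_1'^n$ and encoder~$2$ send $X_2^n=\sqrt{\bar\beta P_2/Q}\,S^n+X_2'^n$, where $X_1'^n,X_2'^n$ are Costa‑type codewords of powers $\gamma P_1$ and $\beta P_2$ carrying $W_1,W_2$, each generated independently of the effective interference $gS^n$ with $g=1+\sqrt{\bar\gamma P_1/Q}+\sqrt{\bar\beta P_2/Q}$ (hence of $S^n$). Then $Y^n=gS^n+X_1'^n+X_2'^n+Z^n$. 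Since $gS^n$ is known (up to a fixed scale) at both encoders, a successive‑cancellation MAC dirty‑paper argument --- decode one message, regenerate its auxiliary sequence, and view the residual as a DPC channel against a scaled copy of $S^n$ still available at the other encoder --- achieves the state‑free Gaussian‑MAC region with powers $(\gamma P_1,\beta P_2)$, i.e.\ \eqref{eq:rc1}--\eqref{eq:rc3}, the two decoding orders and time‑sharing giving its dominant face. For the state, the receiver applies the linear MMSE estimator to $Y^n$: because $X_1'^n+X_2'^n+Z^n$ is uncorrelated with $S^n$ with per‑letter variance $\vZ+\gamma P_1+\beta P_2$, and $\E[Y_i^2]=g^2Q+\vZ+\gamma P_1+\beta P_2$ is precisely the denominator of \eqref{eq:distc}, this attains \eqref{eq:distc}. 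Time‑sharing over $(\gamma,\beta)$ and over decoding orders yields the convex closure.

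\textit{Converse.} Start from Fano's inequality for $R_1$, $R_2$, $R_1+R_2$, and from an arbitrary code define the single‑letter parameters $\gamma:=\frac{1}{nP_1}\sum_i\E[\Var(X_{1i}\mid S^n)]\in[0,1]$ and $\beta:=\frac{1}{nP_2}\sum_i\E[\Var(X_{2i}\mid S^n)]\in[0,1]$. Using $W_1\perp(W_2,S^n)$ and that $Y^n$ given $(W_1,W_2,S^n)$ is $Z^n$ shifted by a constant, $I(W_1;Y^n)\le I(W_1;Y^n\mid W_2,S^n)=h(X_1^n+Z^n\mid S^n)-h(Z^n)$; bounding $h(X_1^n+Z^n\mid S^n)\le\sum_i h(X_{1i}+Z_i\mid S^n)$ by conditional Gaussian maximum entropy and Jensen gives \eqref{eq:rc1}, and symmetrically \eqref{eq:rc2}. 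For \eqref{eq:rc3} the crucial observation is that $X_1^n$ and $X_2^n$ are conditionally independent given $S^n$ (as $W_1\perp W_2$), so the conditional variance of $X_{1i}+X_{2i}$ given $S^n$ carries no cross term and averages to at most $\gamma P_1+\beta P_2$; this is exactly what avoids the weaker bound with $(\sqrt{\gamma P_1}+\sqrt{\beta P_2})^2$ and produces \eqref{eq:rc3}. For the distortion, use $D\ge\frac1n\sum_i\mathrm{mmse}(S_i\mid Y^n)\ge\frac{1}{2\pi e}\,2^{\frac2n h(S^n\mid Y^n)}$ (coordinatewise maximum‑entropy estimate, Jensen, and $\sum_i h(S_i\mid Y^n)\ge h(S^n\mid Y^n)$), and expand $h(S^n\mid Y^n)=h(S^n)+h(Y^n\mid S^n)-h(Y^n)$. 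Here $h(Y^n\mid S^n)=h(Z^n)+I(W_1,W_2;Y^n\mid S^n)\ge h(Z^n)+I(W_1,W_2;Y^n)\ge h(Z^n)+n(R_1+R_2)-n\delta_n$ (since $I(W_1,W_2;S^n)=0$, plus Fano), while $h(Y^n)\le\frac n2\log\!\big(2\pi e\cdot\frac1n\sum_i\E[Y_i^2]\big)$, and a Cauchy--Schwarz computation --- bounding $\E[X_{ji}S_i]$ and the cross term $\E[X_{1i}X_{2i}]=\E[\E[X_{1i}\mid S^n]\,\E[X_{2i}\mid S^n]]$ by square roots of the state‑explained powers $\E[X_{ji}^2]-\E[\Var(X_{ji}\mid S^n)]$ and then passing to block averages --- shows $\frac1n\sum_i\E[Y_i^2]\le\mathcal D(\gamma,\beta)$, the denominator of \eqref{eq:distc}. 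Combining gives $D\ge Q\vZ\,2^{2(R_1+R_2)}/\mathcal D(\gamma,\beta)$. Finally, since $\mathcal D$ is decreasing in each argument and $(2^{2R_1}-1)(2^{2R_2}-1)\ge 0$, one may replace $(\gamma,\beta)$ by the smallest $(\gamma^\star,\beta^\star)\in[0,1]^2$ still consistent with \eqref{eq:rc1}--\eqref{eq:rc3} and with $\gamma^\star P_1+\beta^\star P_2=\vZ(2^{2(R_1+R_2)}-1)$, whence $Q\vZ\,2^{2(R_1+R_2)}/\mathcal D(\gamma,\beta)\ge Q(\vZ+\gamma^\star P_1+\beta^\star P_2)/\mathcal D(\gamma^\star,\beta^\star)$, which is \eqref{eq:distc}. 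Thus every achievable triple satisfies \eqref{eq:rc1}--\eqref{eq:distc} for some $(\gamma,\beta)$, matching the achievable region (and its convex closure).

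\textit{Main obstacle.} I expect the converse to be the hard part: choosing the single‑letter parameters so that all four inequalities \eqref{eq:rc1}--\eqref{eq:distc} fall out of one $(\gamma,\beta)$; using conditional independence given $S^n$ carefully enough that the sum‑rate and the $\E[Y_i^2]$ bounds come out with the exact constants rather than Cauchy--Schwarz‑inflated ones; and repackaging $Q\vZ\,2^{2(R_1+R_2)}/\mathcal D$ into the precise, non‑polyhedral form \eqref{eq:distc}, which is why three scalar MAC inequalities do not suffice here.
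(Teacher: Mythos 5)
Your achievability argument is essentially the paper's own (power splitting, uncoded state amplification, Costa coding with successive cancellation plus time sharing, linear MMSE at the receiver), but your converse is correct by a genuinely different route. The paper never establishes \eqref{eq:rc1}--\eqref{eq:distc} directly for a single $(\gamma,\beta)$; it instead bounds every supporting hyperplane $\mu_1 R_1+\mu_2R_2+\tfrac{\lambda}{2}\log\tfrac{Q}{D}$ via a weighted-sum lemma whose key step is that $\lambda h(Y_i)+(1-\lambda)h(Y_i|S_i)$ is simultaneously maximized by jointly Gaussian inputs for all $\lambda\in[0,1]$, runs a case analysis over three regimes of $\lambda$ versus $\mu$, and then needs a separate equivalence section (resting on a strict-concavity lemma) to show those hyperplane bounds are attained by the achievable region. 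You instead fix the code, define $(\gamma,\beta)$ from per-letter conditional variances given $S^n$, obtain the three rate bounds together with $D\ge Q\vZ 2^{2(R_1+R_2)}/\mathcal{D}(\gamma,\beta)$ from $h(S^n|Y^n)=h(S^n)+h(Y^n|S^n)-h(Y^n)$, and then convert this into the parametric form \eqref{eq:distc} by shrinking to a pair $(\gamma^\star,\beta^\star)$ on the line $\gamma^\star P_1+\beta^\star P_2=\vZ(2^{2(R_1+R_2)}-1)$ that still dominates the individual rate constraints; such a pair exists precisely because $(2^{2R_1}-1)(2^{2R_2}-1)\ge 0$, and the substitution only increases the denominator $\mathcal{D}$ by its monotonicity. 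That reparametrization is the genuinely new ingredient: it lets you land directly on the stated region and bypass both the Lagrangian case analysis and the equivalence section. The shared core is the same in both proofs --- the Markov chain $X_1^n\to S^n\to X_2^n$ eliminating the cross term, conditional Gaussian maximum entropy, and Jensen for single-letterization. The paper's route buys a mechanical extension to $N$ senders and an explicit picture of which faces are exposed by which weight vectors; yours buys a shorter, self-contained two-user converse. One small point to tighten: ``the smallest $(\gamma^\star,\beta^\star)$'' is not unique --- you should take some point of the segment joining $\bigl(\vZ(2^{2R_1}-1)/P_1,\,\vZ(2^{2R_2}-1)/P_2\bigr)$ to $(\gamma,\beta)$ whose weighted sum equals $\vZ(2^{2(R_1+R_2)}-1)$, which exists by the intermediate value theorem; any such point works.
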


Before we prove this result, notice that the rate region is not in general a polytope for any
given distortion value, unlike the  case where state estimation is not required~\cite{kim2004multiple}. 
Nevertheless, the region admits a compact representation as given in \eqref{eq:rc1}--\eqref{eq:distc}.

\begin{proof}
In Section~\ref{sec:achieve}, we present a communication scheme to achieve tuples  satisfying 
the constraints \eqref{eq:rc1}--\eqref{eq:distc}. Then in Section \ref{sec:conv}, we show a 
converse result which bounds the distortion-rate performance for any successful communication scheme.
We further show that the tradeoff cannot be better than the ones defined by 
\eqref{eq:rc1}--\eqref{eq:distc} for some values of $\gamma, \beta \in [0,1]$, this
is given in Section~\ref{sec:equivalence}.
The main novelty of the proof is in the converse result.
\end{proof}


\begin{remark}
Notice that on setting $\gamma=\beta=1$, which amounts to no state estimation requirements, 
we recover the multiuser writing on dirty paper result in \cite{kim2004multiple} which proves that the capacity region of the \emph{dirty} paper MAC  with state is not 
affected by the presence of state.
\end{remark}

One of the motivations of our model comes from sensor networks employed in on-board platforms. We now
make this connection more explicit. 

\subsection{Connection to Multi-sensor models}\label{sec:3user:connect}
Let us introduce a more general $N-$sender Gaussian MAC framework as in Figure~\ref{fig:N:send},
where the receiver observes
\begin{align}
Y = \sum_{i=1}^N X_i + \alpha S + Z.
\end{align}
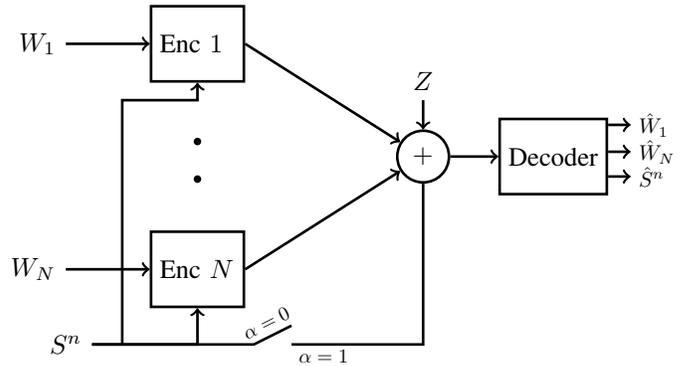
\begin{figure}[h]
\centering
\begin{tikzpicture}[line width=1.0pt]
\node (e1) at (1,1.5) [rectangle, draw, minimum height=1.0cm, text width=1.0cm]{Enc~$1$}; 
\node (e2) at (1,-1.5) [rectangle, draw, minimum height=1.0cm, text width=1.0cm]{Enc~$N$}; 
\node (c1) at (4,0) [circle, draw]{$+$};
\node (d1) at (5,0) [rectangle, draw, right, minimum height=1.0cm]{Decoder};
\node (s1) at (-0.75,-2.5) {$S^n$};

\foreach \i in {-0.3,0.2} {\fill (1,\i) circle(0.05cm);}

\draw[->] (e1.east) -- (c1);
\draw[->] (e2.east) -- (c1);
\draw [<-] (c1) --++(0,0.75) node[above]{$Z$};
\draw[->] (c1) -- (d1);

\draw[<-] (e1) --++(-1.75,0) node[left]{$W_1$};
\draw[<-] (e2) --++(-1.75,0) node[left]{$W_N$};

\draw[->] (s1)  -| (e2);
\draw[->] (s1) ++(0.75,0)--++(0,3.25) -| (e1);
\draw (s1) --++(2.5,0) --node[sloped, above,scale=0.75]{$\alpha=0$}++(0.5,0.25) ++(0,-0.25)
node[yshift=-0.15cm,right, scale=0.75]{$\alpha=1$} -| (c1);
\draw[->] (d1.30) --++(0.3,0) node[right,scale=0.8]{$\hat W_1$};
\draw[->] (d1.5) --++(0.3,0) node[right,scale=0.8]{$\hat W_N$};
\draw[->] (d1.-20) --++(0.3,0) node[right,scale=0.8]{${\hat S}^n$};

\end{tikzpicture}

\caption{Source and Message through a $N$ Sender MAC\label{fig:N:send}}
\end{figure}
When the parameter $\alpha=1$ we recover the state dependent model, and $\alpha=0$ corresponds
to a source estimation and message communication problem over a GMAC without state. 
The transmissions $X_i$ of user~$i$ is subjected to an average power constraint
of $P_i$.
Each transmitter observes the source process $S$, which is assumed to be 
non-causally available to them. The receiver should estimate the source, as well as an
independent message stream from  each transmitter. We term the model
as the \emph{source-message communication} problem. 
Notice that some of the nodes may not have any messages, they simply help in the estimation
of source. It turns out that having uncoded transmission at each node devoid of any 
messages is indeed the optimal strategy in such set ups, marking
the importance of the results presented for state-dependent models in 
the previous sub-section.  A brief literature review on  source and message 
communication is in order.

Goblick~\cite{goblick1965theoretical} showed that for transmission of Gaussian sources over 
Gaussian channels, an uncoded strategy of sending a scaled version of the source to meet the 
power constraint and then MMSE estimation at the receiver is optimal. The same approach can 
be used to communicate Gaussian sources over Gaussian broadcast channels, as was observed in 
Prabhakaran et. al~\cite{prabhakaran2011hybrid}. We already mentioned that
the  uncoded approach is optimal for source estimation in some symmetric GMAC models, 
in the absence of messages~\cite{gastpar08}. In the presence of independent private 
message stream at each node in an $N-$sender  Gaussian MAC, we show that 
uncoded source transmissions are indeed optimal while communicating a common source observation.


Our main result for the $N-$sender Gaussian MAC with source-message communication 
is stated in the following theorem.
\begin{theorem} \label{thm:main:2}
For an $N$-sender GMAC with message and state communication, the optimal trade-off region is given by the convex closure of the set of $(R_1,R_2,\cdots,R_{N},D)$ such that
\begin{gather}
\sum_{j \in \mathcal{J}} R_j \leq \frac{1}{2} \log \left(1+\frac{\sum_{j \in \mathcal{J}} \gamma_j P_j}{\vZ}\right) \: \forall \: \mathcal{J} \subseteq [1:N], \\
D \geq \frac{Q\left(\vZ+\sum_{j=1}^N \gamma_j P_j \right)}{\vZ+\sum_{j=1}^N \gamma_j P_j+\left(\alpha\sqrt{Q}+\sum_{j=1}^N \sqrt{(1-\gamma_j)P_j}\right)^2}, \label{eq:distc:N}
\end{gather}
for some $(\gamma_1, \cdots, \gamma_N) \in [0,1]^N$.
\end{theorem}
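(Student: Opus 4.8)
The plan is to prove Theorem~\ref{thm:main:2} by the same three-step template used for Theorem~\ref{thm:main} --- an achievable coding scheme, an information-theoretic converse yielding a ``raw'' outer region, and an equivalence argument identifying that region with the stated convex closure --- now carried out for $N$ senders and general $\alpha$. A useful reduction to keep in mind is that the $\alpha=1$ case with $N$ users of powers $(P_1,\dots,P_N)$ embeds into the $\alpha=0$ case with $N+1$ users of powers $(P_1,\dots,P_N,Q)$, where the extra user has no message and transmits $S^n$ uncoded; this lets one handle $\alpha$ by proving the $\alpha=0$ statement and specializing to $(\gamma_{N+1},R_{N+1})=(0,0)$, since \eqref{eq:distc:N} is monotone in $\gamma_{N+1}$ and the subset-rate constraints involving the extra user are slack. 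Alternatively, the arguments below go through verbatim with $\alpha$ present, $\alpha$ entering only through the single additive term $\alpha\sqrt{Q}$ in the ``coherent gain.''

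\emph{Achievability.} Each encoder splits power as $P_j=\gamma_jP_j+(1-\gamma_j)P_j$, devoting the second part to the uncoded (coherent) signal $\sqrt{(1-\gamma_j)P_j/Q}\,S^n$ and the first part to a dirty-paper codeword for $W_j$ that treats the deterministic multiple $\big(\alpha\sqrt{Q}+\sum_k\sqrt{(1-\gamma_k)P_k}\big)S^n/\sqrt{Q}$ of $S^n$ --- known non-causally at every encoder --- as channel state. By the multiuser writing-on-dirty-paper result~\cite{kim2004multiple}, the messages are jointly decodable at all rate tuples in the no-state Gaussian MAC region with per-user powers $\gamma_jP_j$ and noise $\vZ$, i.e.\ the first inequality of the theorem. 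For the source, the decoder forms the linear MMSE estimate of $S^n$ from $Y^n$; since the dirty-paper codewords are marginally Gaussian, mutually uncorrelated, and uncorrelated with $S^n$, the received signal has variance $\vZ+\sum_j\gamma_jP_j+\big(\alpha\sqrt{Q}+\sum_j\sqrt{(1-\gamma_j)P_j}\big)^2$ and covariance $\sqrt{Q}\big(\alpha\sqrt{Q}+\sum_j\sqrt{(1-\gamma_j)P_j}\big)$ with $S^n$, which gives exactly \eqref{eq:distc:N}. Time-sharing over parameter vectors $(\gamma_1,\dots,\gamma_N)\in[0,1]^N$ yields the convex closure; nodes without a message simply take $\gamma_j=0$.

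\emph{Converse.} For any $(n,R_1,\dots,R_N,D,\epsilon)$ scheme, decompose each input as $X_j^n=\E[X_j^n\mid S^n]+\tilde X_j^n$ and define $\gamma_jP_j:=\tfrac1n\E\|\tilde X_j^n\|^2$; then $\gamma_j\in[0,1]$, $\tfrac1n\E\|\E[X_j^n\mid S^n]\|^2\le(1-\gamma_j)P_j$, and the innovations $\tilde X_j^n$ are pairwise uncorrelated and orthogonal to every function of $S^n$ (using mutual independence of the $W_j$ and their independence of $S^n$). For each $\mathcal J\subseteq[1:N]$, giving the decoder the genies $S^n$ and $W_{\mathcal J^c}$ and applying Fano gives $n\sum_{j\in\mathcal J}R_j\le I(W_{\mathcal J};Y^n\mid S^n,W_{\mathcal J^c})+n\epsilon_n$, and since conditioned on $(S^n,W_{\mathcal J^c})$ the output equals a constant plus $\sum_{j\in\mathcal J}\tilde X_j^n+Z^n$, a single-letter Gaussian-maximizes-entropy bound using $\tfrac1n\E\|\sum_{j\in\mathcal J}\tilde X_j^n\|^2=\sum_{j\in\mathcal J}\gamma_jP_j$ produces the subset-rate inequality. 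For distortion, $D\ge\tfrac1{2\pi e}2^{\frac2n h(S^n\mid Y^n)}=Q\,2^{-\frac2n I(S^n;Y^n)}$; writing $I(S^n;Y^n)=h(Y^n)-h(Y^n\mid S^n)$, one bounds $h(Y^n)$ above by a Gaussian of variance $\vZ+\sum_j\gamma_jP_j+\big(\alpha\sqrt{Q}+\sum_j\sqrt{(1-\gamma_j)P_j}\big)^2$ via Minkowski's inequality on the $S^n$-measurable part, and bounds $h(Y^n\mid S^n)\ge h(Z^n)+I(W_{[1:N]};Y^n\mid S^n)\ge h(Z^n)+n\sum_jR_j-n\epsilon_n$ by Fano, giving the raw distortion bound $D\cdot\big(\vZ+\sum_j\gamma_jP_j+(\alpha\sqrt{Q}+\sum_j\sqrt{(1-\gamma_j)P_j})^2\big)\ge Q\,\vZ\,2^{2\sum_jR_j}-o(1)$.

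\emph{Equivalence, and the hard part.} What remains --- and where the real work lies --- is to show that the raw outer region (points admitting some $\gamma\in[0,1]^N$ satisfying the subset-rate inequalities together with the raw distortion inequality) coincides with the stated convex closure of the regions cut out by \eqref{eq:distc:N}. The crux is a rebalancing step: given a feasible $\gamma$, one decreases its coordinates to a $\gamma'\le\gamma$ that puts the sum-rate constraint on its boundary while all subset-rate constraints stay satisfied --- a polymatroid feasibility question whose solvability reduces to superadditivity inequalities of the type $\prod_{j\in\mathcal J}(2^{2R_j}-1)\ge0$ and their analogues --- at which point $\vZ\,2^{2\sum_jR_j}=\vZ+\sum_j\gamma_j'P_j$, so the raw distortion bound becomes exactly \eqref{eq:distc:N} for $\gamma'$; monotonicity of the denominator of \eqref{eq:distc:N} in each $\gamma_j$ then shows passing from $\gamma$ to $\gamma'$ only strengthens the bound. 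I expect the main obstacle to be precisely this matching for general $N$: tracking the per-letter definitions of the $\gamma_j$, controlling the $N$-fold Cauchy--Schwarz cross terms so that the ``coherent'' and ``innovation'' powers combine the right way, and verifying the rebalancing (and the resulting convexification/closure) in the non-polytopal $N$-user setting; everything else is a routine lift of the two-sender proof of Theorem~\ref{thm:main}.
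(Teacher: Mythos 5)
Your architecture (achievability, converse, region-matching) is the same as the paper's, and your achievability and subset-rate bounds are essentially identical to what the paper does; but your converse-plus-matching route is genuinely different. The paper bounds the supporting-hyperplane functional $\sum_i \mu_i R_i + \tfrac{\lambda}{2}\log\tfrac{Q}{D_n}$ for \emph{every} weight vector, which forces a case analysis over three regimes of $\lambda$ (Lemmas~\ref{lem:sum:bnd:N}--\ref{lem:decreasing}), requires the simultaneous maximization of $\lambda h(Y_i)+(1-\lambda)h(Y_i|S_i)$ by a single jointly Gaussian choice for all $\lambda\in[0,1]$, and then needs the strict-concavity lemma (Lemma~\ref{lem:concavityN}) together with an induction on $N$ to show the hyperplane bounds collapse onto the achievable region. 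You instead extract only the $\lambda=0$ hyperplanes (the genie-aided subset-rate bounds) and the single $\lambda=1$, $\mu\equiv 1$ hyperplane $\sum_j R_j+\tfrac12\log\tfrac{Q}{D_n}\le\tfrac12\log\tfrac{V(\gamma)}{\vZ}$ via $I(S^n,W_{[1:N]};Y^n)=h(Y^n)-h(Z^n)$, and then argue set-theoretically. This buys a real simplification: no simultaneous-maximization lemma, no concavity lemma, no induction, and --- if completed --- the stronger conclusion that the \emph{un-convexified} union over $\gamma$ already exhausts the capacity region. What the paper's route buys is that each boundary point is pinned down by an explicit supporting hyperplane, so nothing combinatorial is left implicit.

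The one step you must actually carry out (and correctly identify as the crux) is the rebalancing: given the converse's $\gamma$, you need $\gamma'\le\gamma$ coordinatewise with $\sum_{j\in\mathcal J}\gamma'_jP_j\ge\vZ\bigl(2^{2R(\mathcal J)}-1\bigr)$ for all $\mathcal J$ and equality at $\mathcal J=[1:N]$. This is the statement that any point of the contra-polymatroid of the supermodular set function $g(\mathcal J)=\vZ\bigl(2^{2R(\mathcal J)}-1\bigr)$ dominates a point of its base polytope; supermodularity follows from $g(\mathcal J\cup\{k\})-g(\mathcal J)=\vZ\,2^{2R(\mathcal J)}(2^{2R_k}-1)$ being nondecreasing in $\mathcal J$, and the domination claim is then standard (reduce coordinates greedily until the ground-set constraint is tight). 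Once you have $\gamma'\le\gamma$, your observation that $V(\gamma)=\alpha^2Q+\sum_jP_j+\vZ+2\alpha\sum_j\sqrt{(1-\gamma_j)P_jQ}+2\sum_{j<k}\sqrt{(1-\gamma_j)(1-\gamma_k)P_jP_k}$ is coordinatewise nonincreasing closes the argument. The remaining bookkeeping (per-letter $\gamma_{ji}$, Cauchy--Schwarz on the cross terms, Jensen over the block) is exactly what the paper does in Appendix~\ref{sec:app:lem:sum:N}, so no new difficulty arises there. In short: correct, and a legitimately different and leaner path, provided the contra-polymatroid domination lemma is stated and proved rather than asserted.
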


\begin{IEEEproof}
This is given in Section~\ref{sec:Nuser}.
\end{IEEEproof}

Let us specialize our results to a three sender Gaussian MAC, as shown in 
Fig.~\ref{fig:3enc}. 
The channel model is
\begin{align}
Y = X_1+X_2+X_3+Z,
\end{align}
with $Z\sim \mathcal N(0,\sigma_z^2)$, and the power constraints $\eE|X_j|^2 \leq P_j, j=1,2,3$.
\begin{figure}[h]
\centering
\includegraphics[scale=0.6]{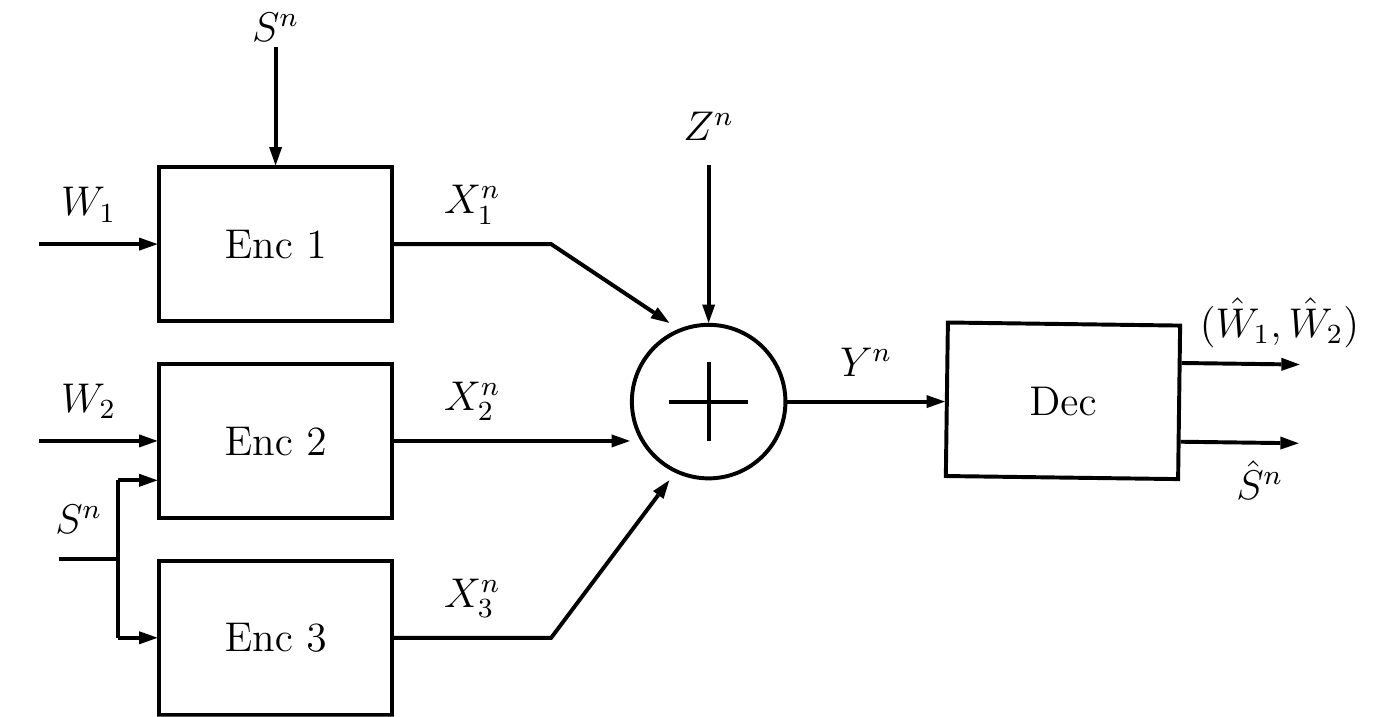}
\caption{Three user MAC with common $S^n$ and two messages\label{fig:3enc}}
\end{figure}
Suppose the third terminal is only interested in conveying the source process under a distortion constraint, 
and it follows an uncoded strategy by sending
\begin{align}
X_3 = \sqrt{\frac{P_3}{Q}} S.
\end{align}
Then the overall model becomes
\begin{align}
Y &= X_1+X_2+\sqrt{\frac{P_3}{Q}} S+Z \\
&= X_1+X_2+S^{\prime}+Z,
\end{align}
where $S^{\prime}=\sqrt{\frac{P_3}{Q}} S$ is non-causally known to both the encoders.
Notice that this is indeed the
 joint state estimation and communication model for a dirty paper MAC with state estimation. 
%
The following corollary is of interest when the third user has no message.
\begin{corol} \label{cor:main}
For the three sender MAC with source and message communication, the optimal trade-off 
region $\sCapb$ when $\alpha=R_3=0$ is given by the convex closure of all 
$(R_1,R_2,D) \in \mathbb{R}_{+}^{3}$ such that \eqref{eq:rc1}--\eqref{eq:rc3} holds and
\begin{gather}
D \geq \frac{Q(\vZ+\gamma P_1+\beta P_2)}{\sum_{i=1}^3 P_i+\!\vZ+2\sqrt{\bar{\gamma} P_1 P_3}+2\sqrt{\bar{\beta} P_2 P_3}+2\sqrt{\bar{\gamma}\bar{\beta}P_1 P_2}}. \label{distcb}
\end{gather}
for some $\gamma, \beta \in [0,1]$ 
with $\bar{\gamma}=1-\gamma$ and $\bar{\beta}=1-\beta$.  
\end{corol}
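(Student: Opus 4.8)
The plan is to obtain Corollary~\ref{cor:main} as a direct specialization of Theorem~\ref{thm:main:2}: set $N=3$, $\alpha=0$, rename $\gamma_1=\gamma$ and $\gamma_2=\beta$, impose $R_3=0$, and then eliminate the remaining free parameter $\gamma_3$, which — since the third user carries no message — appears only in the distortion bound. First I would dispose of the rate constraints. Among the inequalities $\sum_{j\in\mathcal J}R_j\le\frac12\log(1+\sum_{j\in\mathcal J}\gamma_jP_j/\vZ)$ for $\mathcal J\subseteq[1:3]$, the one for $\mathcal J=\{3\}$ reads $0\le\frac12\log(1+\gamma_3P_3/\vZ)$ and holds vacuously, while for any $\mathcal J$ with $3\in\mathcal J$ the left side is unchanged by dropping the index $3$ and the right side only grows (as $\gamma_3P_3\ge0$), so such a constraint is implied by the one for $\mathcal J\setminus\{3\}$. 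Hence the rate region reduces to precisely \eqref{eq:rc1}--\eqref{eq:rc3}, with no dependence on $\gamma_3$.

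Next I would minimize the distortion bound \eqref{eq:distc:N} over $\gamma_3\in[0,1]$ for fixed $(\gamma,\beta)$. Writing $A=\vZ+\gamma P_1+\beta P_2$, $c=\sqrt{\bar\gamma P_1}+\sqrt{\bar\beta P_2}$, and $t=\gamma_3P_3\in[0,P_3]$, and using $(1-\gamma_3)P_3=P_3-t$, the right-hand side of \eqref{eq:distc:N} (with $\alpha=0$) equals $Q(A+t)/\big(A+t+(c+\sqrt{P_3-t})^2\big)=Q/\big(1+(c+\sqrt{P_3-t})^2/(A+t)\big)$. Since $(c+\sqrt{P_3-t})^2$ is decreasing in $t$ while $A+t$ is increasing, the fraction $(c+\sqrt{P_3-t})^2/(A+t)$ is monotonically decreasing on $[0,P_3]$, so the distortion lower bound is minimized at $t=0$, i.e.\ $\gamma_3=0$. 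As a smaller lower bound on $D$ enlarges the achievable set, for each $(\gamma,\beta)$ the choice $\gamma_3=0$ yields the weakest — hence the effective — distortion constraint. Substituting $\gamma_3=0$ (so $\sqrt{(1-\gamma_3)P_3}=\sqrt{P_3}$), expanding $(\sqrt{\bar\gamma P_1}+\sqrt{\bar\beta P_2}+\sqrt{P_3})^2$ in the denominator, and collapsing $\gamma P_1+\bar\gamma P_1=P_1$ and $\beta P_2+\bar\beta P_2=P_2$, the distortion bound becomes exactly \eqref{distcb}.

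Finally, for the convex-closure bookkeeping: $\sCapb$ is the $R_3=0$ section of the Theorem~\ref{thm:main:2} region, and since all rates are nonnegative this section is the convex closure of the union over $(\gamma_1,\gamma_2,\gamma_3)$ of the sliced boxes; by the two observations above, the slice for parameters $(\gamma,\beta,\gamma_3)$ is contained in the slice for $(\gamma,\beta,0)$, which is precisely the set carved out by \eqref{eq:rc1}--\eqref{eq:rc3} and \eqref{distcb}. Taking convex closures of both sides of this set identity gives the claimed characterization of $\sCapb$. I do not anticipate a genuine obstacle: the only mildly delicate point is the monotonicity argument pinning $\gamma_3=0$ as optimal in \eqref{eq:distc:N}, and the rest is bookkeeping and elementary algebra.
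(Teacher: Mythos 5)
Your proposal is correct and follows essentially the same route as the paper: specialize Theorem~\ref{thm:main:2} to $N=3$, $\alpha=0$, $R_3=0$, observe that the rate constraints involving user~$3$ become vacuous, and use the monotonicity of the distortion bound in user~$3$'s power-split parameter to conclude that setting it to zero (pure uncoded state amplification at the third sender) yields the weakest constraint, which expands to \eqref{distcb}. This matches the paper's argument, so nothing further is needed.
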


\section{Achievability for the Dirty Paper MAC}\label{sec:achieve}
We employ a suitable power splitting strategy along with dirty paper coding to prove the 
achievability. This is rather straightforward, but the details are given for completeness. 
The available power $P_1$ at encoder $1$ is split into two parts: namely $\gamma P_1$ for 
message transmission and $\bar{\gamma}P_1$ for state amplification, for some $\gamma \in [0,1]$. Likewise, the power $P_2$ available at the second encoder is split into $\beta P_2$ (message transmission) and $\bar{\beta}P_2$ (state amplification) for some $\beta \in [0,1]$. Then generate the state amplification signals
\begin{equation}
X_{1 \textrm{s} j} = \sqrt{\frac{\bar{\gamma}P_1}{Q}} S_j \:\: \textup{and} \:\: X_{2 \textrm{s} j} = \sqrt{\frac{\bar{\beta}P_2}{Q}} S_j, \, 1\leq j \leq n
\end{equation}
at the respective encoders. Now the system model in \eqref{eq:syst} can be rewritten as
\begin{align}
Y &= X_{1 \textrm{m}}+X_{1 \textrm{s}}+X_{2 \textrm{m}}+X_{2 \textrm{s}}+S+Z \notag\\
&= X_{1 \textrm{m}}+X_{2 \textrm{m}}+\left(1+\sqrt{\frac{\bar{\gamma}P_1}{Q}}+\sqrt{\frac{\bar{\beta}P_2}{Q}}\right)S+Z.
\end{align}
Here the index $\textrm{m}$ in the subscript indicates that 
the corresponding signals are intended for message transmission, while the subscript 
$\textrm{s}$ indicates state amplification signals. Now in order to communicate the messages across to the receiver, we employ the writing on dirty paper result for a Gaussian MAC~\cite{kim2004multiple}.

Recall that a known \emph{dirt} over an AWGN channel can be completely cancelled by dirty paper coding~\cite{costa1983writing}. More generally, a rate $R$ satisfying
\begin{equation}
R \leq I(U_1;Y)-I(U_1;S),
\end{equation}
when evaluated for some feasible distribution $p(u_1,s,x)p(y|x,s)$,  
can be achieved by Gelfand-Pinsker coding~\cite{gel1980coding} for a point-to-point channel 
with noncausally known state. 
In order to achieve \eqref{eq:rc1} -- \eqref{eq:rc3}, we first consider a dirty paper channel with input $X_{1 \textrm{m}}$, known state \smash{$S^{\prime}=\left(1+\sqrt{\bar{\gamma}P_1/Q}+\sqrt{\bar{\beta}P_2/Q}\right)S$} and unknown noise $X_{2 \textrm{m}}+Z$. We choose $U_1=X_{1 \textrm{m}}+\alpha_1 S^{\prime}$, $X_{1 \textrm{m}} \independent S$ with $X_{1 \textrm{m}} \sim \mathcal{N}(0,\gamma P_1)$ and \smash{$\alpha_1 = \frac{\gamma P_1}{\gamma P_1+\beta P_2+\vZ}$}. The achievable rate is
\begin{align}
R_1 &= \frac{1}{2} \log \left(1+\frac{\gamma P_1}{\beta P_2+\vZ}\right).
\end{align}
Once the $U_1^n$ codeword is decoded, it can be subtracted from $Y^n$ to obtain
\begin{equation}
\tilde{Y}^n = Y^n-U_1^n=X_{2 \textrm{m}}^n+(1-\alpha_1){S^{\prime}}^n+Z^n.
\end{equation}
Now for sender $2$, this can be considered as another dirty paper channel with input $X_{2 \textrm{m}}$, known state $S^{\prime\prime}=(1-\alpha_1)S^{\prime}$ and unknown noise $Z$. Let us choose $U_2=X_{2 \textrm{m}}+\alpha_2 S^{\prime\prime}$, $X_{2 \textrm{m}} \independent S$ with $X_{2 \textrm{m}} \sim \mathcal{N}(0,\beta P_2)$ and \smash{$\alpha_2 = \frac{\beta P_2}{\beta P_2+\vZ}$}. The achievable rate becomes
\begin{align}
R_2 &= \frac{1}{2} \log \left(1+\frac{\beta P_2}{\vZ}\right).
\end{align}
By reversing the decoding order, we can show that the following rate pair is also achievable
\begin{equation}
(R_1,R_2) = \left(\frac{1}{2} \log \left(1+\frac{\gamma P_1}{\vZ}\right), \frac{1}{2} \log \left(1+\frac{\beta P_2}{\gamma P_1+\vZ}\right)\right).
\end{equation}
The entire $(R_1,R_2)$ rate region as in expressions \eqref{eq:rc1} through \eqref{eq:rc3} can now be achieved by time sharing.

Now we turn to the proof of the achievable distortion. Based on the observation $Y^n$, the receiver forms the linear estimate
\begin{equation*}
\hat{S}^n = \frac{(Q+\sqrt{\gamma P_1 Q}+\sqrt{\beta P_2 Q})\:\: Y^n}{P_1\!+\!P_2\!+\!Q\!+\!\vZ\!+\!2\sqrt{\bar{\gamma}P_1 Q}\!+\!2\sqrt{\bar{\beta}P_2 Q}\!+\!2\sqrt{\bar{\gamma}\bar{\beta}P_1 P_2}}.
\end{equation*}
The MMSE can be readily calculated to be the RHS of expression \eqref{eq:distc}. This completes the proof of achievability.

\def\neps{n \epsilon_n}
\section{Outer Bound for the Dirty Paper MAC with State Estimation}\label{sec:conv}
In this section and the next, we show that any successful communication scheme has to satisfy the rate and distortion
constraints of Theorem~\ref{thm:main}. Two ideas from the single user result of 
\cite{sutivong2005channel} will turn out to be very useful towards the proof. The first is stated below as a 
lemma, its proof can be found in \cite{sutivong2005channel}. 
\begin{lemma} \label{lem:dist:MI}
Any communication scheme achieving a distortion 
$D_n \doteq \frac{1}{n} \eE{||S^n-\hat{S}^n||^2}$ over  block length $n$ will have
\begin{equation} 
\frac{n}{2} \log\left(\frac{Q}{D_n}\right) \leq I(S^n;Y^n). \label{eq:dist:MI}
\end{equation}
\end{lemma}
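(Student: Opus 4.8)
The plan is to establish the inequality by a standard rate-distortion lower bound argument, combined with the fact that the Gaussian distribution maximizes entropy for a given variance. First I would observe that $\frac{n}{2}\log\frac{Q}{D_n}$ is nothing but the rate-distortion function of the i.i.d.\ Gaussian source $S^n\sim\mathcal{N}(0,Q)$ evaluated at distortion $D_n$ under squared error, namely $R_S(D_n)=\frac{1}{2}\log\frac{Q}{D_n}$ per symbol. Since the receiver produces $\hat S^n=\phi(Y^n)$ and achieves $\frac1n\eE\|S^n-\hat S^n\|^2\le D_n$, the data processing / rate-distortion converse gives $I(S^n;\hat S^n)\ge n R_S(D_n)$, and then the Markov chain $S^n - Y^n - \hat S^n$ together with the data processing inequality yields $I(S^n;Y^n)\ge I(S^n;\hat S^n)\ge \frac{n}{2}\log\frac{Q}{D_n}$.

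To make the rate-distortion step self-contained I would spell it out: write $I(S^n;\hat S^n)=h(S^n)-h(S^n\mid\hat S^n)=\frac n2\log(2\pi e Q)-h(S^n\mid\hat S^n)$, using that $S^n$ is i.i.d.\ $\mathcal N(0,Q)$. Then bound $h(S^n\mid\hat S^n)\le\sum_{i=1}^n h(S_i-\hat S_i\mid\hat S_i)\le\sum_{i=1}^n h(S_i-\hat S_i)\le\sum_{i=1}^n\frac12\log\bigl(2\pi e\,\eE[(S_i-\hat S_i)^2]\bigr)$, the last step being the Gaussian maximum-entropy bound. Applying Jensen's inequality (concavity of $\log$) to $\frac1n\sum_i\eE[(S_i-\hat S_i)^2]\le D_n$ gives $h(S^n\mid\hat S^n)\le\frac n2\log(2\pi e D_n)$, and subtracting yields $I(S^n;\hat S^n)\ge\frac n2\log\frac{Q}{D_n}$.

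There is no real obstacle here; the only mild subtlety is the conditioning-reduces-entropy and subadditivity chain for $h(S^n\mid\hat S^n)$, and being slightly careful that $\hat S^n$ is a deterministic function of $Y^n$ so the Markov chain $S^n-Y^n-\hat S^n$ holds and data processing applies in the direction claimed. Since the excerpt says the proof of this lemma can be found in~\cite{sutivong2005channel}, I would in fact just cite that reference and present the above as a one-paragraph sketch for the reader's convenience rather than a full derivation.
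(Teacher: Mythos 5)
Your argument is correct and is essentially the paper's approach: the paper does not reproduce a proof but defers to \cite{sutivong2005channel}, whose proof of this inequality is exactly the rate--distortion converse you give (data processing through $S^n \to Y^n \to \hat S^n$, the chain rule with conditioning reducing entropy, the Gaussian maximum-entropy bound, and Jensen's inequality). No gaps; citing the reference with your sketch is precisely what the paper intends.
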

The second useful idea is to construct bounds for the term $R+ \lambda \log \frac Q{D_n}$, instead of 
separate bounds for rate $R$ and distortion $D_n$. As in the single user case, the
above transformation of the distortion function turns out to be sufficient for the dirty paper MAC 
with state as well, however we now have to consider rate-pairs $(R_1,R_2)$. In addition, we will use the following property of Gaussian random variables~\cite{cover2012elements}:
Gaussians maximize entropy, i.e. for  $X_g^n \sim \mathcal N(0,K)$
\begin{align} \label{eq:gauss:ent}
h(X^n) \leq h(X_g^n) \text{ when ever } \Cov(X^n) \preceq K.
\end{align}
The above facts will be extensively used in our proofs.
For $(\mu_1, \mu_2, \lambda) \in { {\mathbb R}^+}^3$, let us define
 $$
T(\mu_1, \mu_2, \lambda) = \max \mu_1 R_1 + \mu_2 R_2 + \frac\lambda 2 \log \frac QD,
$$
where the maximum is over all $(R_1, R_2, D)$ obeying \eqref{eq:rc1} -- \eqref{eq:distc}. 
Notice that we did not
consider $\mu_i < 0$, as this will trivially correspond to $R_i=0$ in
the maximization, a case already accounted for by $\mu_i=0$. Similarly, since $D\leq Q$,
we need to consider only $\lambda \geq 0$.  Thus, only non-negative weighing coefficients
are considered in the sequel. A converse proof can be obtained by showing that
if $(R_1,R_2,D_n)$ is achievable using block length $n$, then, for all $\mu_1,\mu_2,\lambda \geq 0$,
\begin{align} \label{eq:up:bnd:1}
\mu_1 R_1 + \mu_2 R_2 + \frac \lambda 2 \log \frac Q{D_n} \leq T(\mu_1, \mu_2, \lambda)+o(1).
\end{align}
Our strategy is to convert the LHS of \eqref{eq:up:bnd:1} to a form where \eqref{eq:gauss:ent} can be
applied. Since the messages $(W_1,W_2)$ are independent of $S^n$, we have the
Markov condition $X_1^n \rightarrow S^n \rightarrow X_2^n$. Denoting
\begin{align*}
\Var{[X|Y]} &\triangleq \min_{\alpha} \eE{[X-\alpha Y]^2},
\end{align*}
we have for the $i$-th entry in a block,
\begin{align}
\Var{[X_{1i} + X_{2i}|S^n]} &= \Var{[X_{1i}|S^n]}+\Var{[X_{2i}|S^n]}  \notag \\
	&\leq \Var{[X_{1i}|S_i]}+\Var{[X_{2i}|S_i]}.
\end{align}
More generally, we can define the \emph{empirical} covariance matrix $K_i$ of the combined vector
$(X_{1i},X_{2i},S_i)$,  
with $K_i(m,n)$ denoting its entries.  Let us denote
\begin{align*}
K_i(1,1) &= \eE |X_{1i}|^2 = P_{1i} \\
K_i(2,2) &= \eE |X_{2i}|^2 = P_{2i}.
\end{align*}
Now, let us introduce two parameters  $\gamma_i, \beta_i \in [0,1]$ for each $i\in \{1,\cdots, n\}$ 
such that 
\begin{align} 
\begin{split} \label{eq:ki:entry}
K_i(1,3) &= \eE{[X_{1i} S_i]} = \eta_{1i} \sqrt{(1-\gamma_i)P_{1i}Q} \\
K_i(2,3) &= \eE{[X_{2i} S_i]} = \eta_{2i} \sqrt{(1-\beta_i)P_{2i}Q}, 
\end{split}
\end{align}
where $\eta_{ji} \in \{-1,+1\}, j=1,2$ is the sign of  the correlation. 
The remaining terms of $K_i$ can be evaluated using \eqref{eq:ki:entry} and
\begin{align*}
\Var{[X_{1i}|S_i]} &= \min_{a}\eE{[X_{1i}-a S_i]^2}=\gamma_i P_{1i} \\
\Var{[X_{2i}|S_i]} &= \min_{a}\eE{[X_{2i}-a S_i]^2}=\beta_i P_{2i}.
\end{align*}
Let us also define two parameters $\gamma\in[0,1]$ and $\beta \in [0,1]$:
\begin{align} \label{eq:gam:bet}
\gamma = \frac 1{nP_1} \sum_{i=1}^n \gamma_{i}P_{1i} \text{ and }
\beta = \frac  1{nP_2} \sum_{i=1}^n \beta_{i}P_{2i}.
\end{align}
With this, we are all set to  prove \eqref{eq:up:bnd:1}. First of all, considering
$\mu_1 \geq \mu_2$ is sufficient, as a simple renaming of the indices will 
give us the opposite case. For $\mu_2>0$, since $\lambda$ is an arbitrary positive number, 
we can equivalently maximize $\mu_1 R_1 + \mu_2 R_2 + \mu_2 \lambda \frac 12 \log \frac Q{D_n}$.
Dividing by $\mu_2$, and then renaming $\frac{\mu_1}{\mu_2}$ as $\mu$, the maximization
becomes $\forall \mu \geq 1, \lambda \geq 0$,
\begin{align} \label{eq:mu:opt}
\max \, \mu R_1 + R_2 + \frac \lambda 2 \log \frac Q{D_n}.
\end{align}

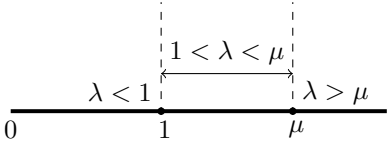
\begin{figure}
\begin{center}
\begin{tikzpicture}
\draw[line width=1.5pt] (0,0) --++(5,0);
\coordinate (c1) at (2,0);
\coordinate (c2) at (3.75,0);

\node at (0,0) [below] {$0$};
\fill (c1) circle (0.05cm) node[below, xshift=0.05cm]{$1$};
\fill (c2) circle (0.05cm) node[below, xshift=0.05cm]{$\mu$};

\draw[dashed] (c1) ++(0,-0.05) --++(0,1.5);
\draw[dashed] (c2) ++(0,-0.05) --++(0,1.5);

\draw (c1) ++(0,0.25) node[left]{$\lambda < 1$};
\draw (c2) ++(0,0.25) node[right]{$\lambda > \mu$};

\draw[<->] (2,0.5) --node[above]{$1< \lambda < \mu$} ++(1.75,0);

\end{tikzpicture}
\end{center}
\caption{Range of $\lambda$ for a given $\mu$~\label{fig:lam:range}}
\end{figure}
%
%

For a given $\mu > 1$, three regimes of $\lambda$ are of interest, as 
depicted in Figure~\ref{fig:lam:range}. 
These regimes can be identified in the $\lambda-\mu$ plane as the 
three cases marked in Figure~\ref{fig:lam:mu}.

\begin{figure}[h]
\centering
\includegraphics[scale=0.6]{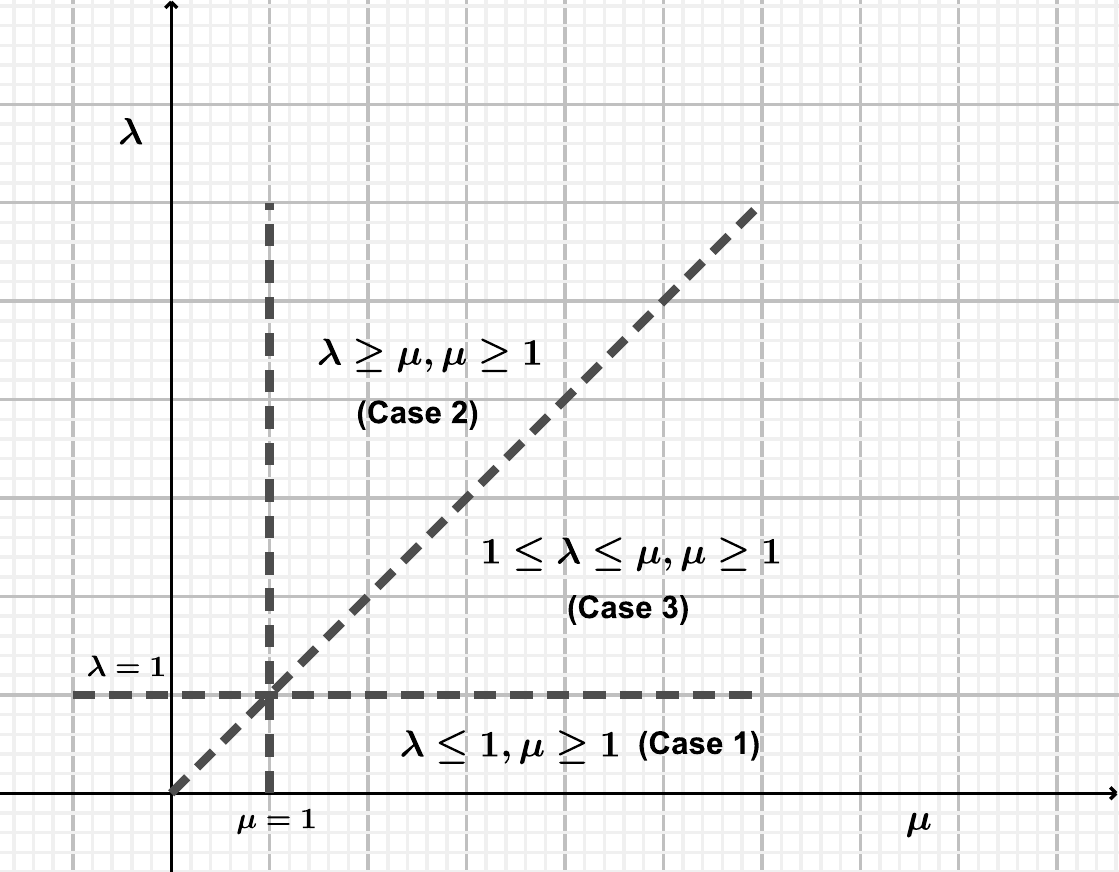}
\caption{Different Regimes involving $\lambda$ and $\mu$~\label{fig:lam:mu}}
\end{figure}

We give slightly different proofs for the three cases marked above. We begin with some 
discussion common to all the cases. Let $R_1(\gamma), R_2(\beta),
R_{sum}(\gamma,\beta)$ and $D(\gamma, \beta)$, respectively, denote the RHS of 
equations \eqref{eq:rc1} -- \eqref{eq:distc}. The following two lemmas play a key
role in our proofs for the various regimes.

\begin{lemma} \label{lem:sum:bnd}
For $\lambda \leq 1$, and $\gamma, \beta$ defined in \eqref{eq:gam:bet}, we have
\begin{multline} \label{eq:up:a}
\mu R_1 + R_2 + \frac \lambda 2 \log \frac Q{D_n} \leq \\
  (\mu-1)R_1(\gamma) + R_{sum}(\gamma, \beta) 
		+ \frac \lambda 2 \log \frac Q {D(\gamma, \beta)}+o(1).  
\end{multline}
\end{lemma}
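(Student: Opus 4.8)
The plan is to bound the left-hand side by converting each information-theoretic quantity into a form controlled by the per-letter empirical covariance matrices $K_i$, and then to invoke the Gaussian maximum-entropy property \eqref{eq:gauss:ent} together with convexity. First I would handle the rate terms: since $R_1$ is achievable, Fano's inequality gives $nR_1 \leq I(W_1; Y^n) + n\epsilon_n$, and since $R_1 + R_2$ is a sum-rate, $n(R_1+R_2) \leq I(W_1,W_2; Y^n) + n\epsilon_n$. Writing $\mu R_1 + R_2 = (\mu-1)R_1 + (R_1+R_2)$ (using $\mu \geq 1$, so both coefficients are nonnegative), I reduce to bounding $(\mu-1)I(W_1;Y^n) + I(W_1,W_2;Y^n)$. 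For the distortion term I would use Lemma \ref{lem:dist:MI}, i.e. $\frac{n}{2}\log\frac{Q}{D_n} \leq I(S^n; Y^n)$. The key algebraic identity then is the single-letterization: each of these mutual informations expands via the chain rule into a telescoping sum of conditional-entropy differences $h(Y_i \mid \cdot) - h(Y_i \mid \cdot, \cdot)$, where the conditioning sets are nested because $Y_i = X_{1i}+X_{2i}+S_i+Z_i$ and the Markov structure $X_1^n \to S^n \to X_2^n$ holds.

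The heart of the argument is then to show that, term by term, the unconditioned entropies $h(Y_i \mid \text{(small conditioning)})$ are maximized by Gaussians with covariance matching the relevant entries of $K_i$, while the heavily-conditioned entropies reduce exactly to $\frac12\log(2\pi e \sigma_Z^2)$ (when everything but $Z_i$ is conditioned out). Concretely: $h(Y_i) \leq \frac12\log\bigl(2\pi e(P_{1i}+P_{2i}+Q+\sigma_Z^2 + 2\eta_{1i}\sqrt{(1-\gamma_i)P_{1i}Q} + 2\eta_{2i}\sqrt{(1-\beta_i)P_{2i}Q} + 2\,\text{Cov}(X_{1i},X_{2i}))\bigr)$; $h(Y_i \mid W_1)$ is bounded using the covariance of $Y_i$ conditioned on $X_{1i}$ being known (dropping $X_{1i}$'s variance but keeping its correlation contribution folded in appropriately); and $h(Y_i \mid S^n) = h(X_{1i}+X_{2i}+Z_i \mid S^n) \leq \frac12\log(2\pi e(\gamma_i P_{1i} + \beta_i P_{2i} + \sigma_Z^2))$ using the conditional-variance computations listed before the lemma and the fact that conditional variances only add because of the Markov chain. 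Assembling the telescoping sum, every intermediate $h(Y_i\mid\cdot)$ appears once with a $+$ and once with a $-$, so I am left with a difference of the ``top'' entropy and the ``all-conditioned'' one, times the appropriate coefficient $\mu-1$, $1$, or $\lambda$.

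After the single-letter bound is in place, the final move is to pass from the per-letter parameters $(\gamma_i,\beta_i,P_{1i},P_{2i},\eta_{ji})$ to the block-averaged $(\gamma,\beta)$ of \eqref{eq:gam:bet}. Here I would apply Jensen's inequality: the functions $\gamma \mapsto R_1(\gamma)$, $(\gamma,\beta)\mapsto R_{sum}(\gamma,\beta)$ are concave in the relevant arguments, and the map governing the distortion term $\frac\lambda2\log\frac{Q}{D(\gamma,\beta)}$ is — for $\lambda \leq 1$ — also concave in the averaged quantities (this is precisely where the restriction $\lambda \le 1$ is used, and it is the reason the three regimes get different treatments). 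The cross-correlation term $\text{Cov}(X_{1i},X_{2i})$ and the sign choices $\eta_{ji}$ must be bounded by their worst case, which by Cauchy–Schwarz gives exactly the $2\sqrt{\bar\gamma\bar\beta P_1 P_2}$-type terms appearing in $D(\gamma,\beta)$; one checks that taking $\eta_{ji}=+1$ and the correlations maximal only weakens the bound, so the inequality is preserved. I expect the main obstacle to be verifying the concavity/convexity direction of the combined functional $ (\mu-1)R_1(\gamma) + R_{sum}(\gamma,\beta) + \frac\lambda2\log\frac{Q}{D(\gamma,\beta)}$ in the averaging variables and confirming that $\lambda \le 1$ is exactly the threshold that makes Jensen go the right way — the rate parts are routine, but the $\log\frac{Q}{D(\gamma,\beta)}$ term has a nonlinear denominator with square-root cross terms, so the concavity check there is the delicate computation. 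The $o(1)$ in \eqref{eq:up:a} absorbs the Fano terms $n\epsilon_n$ and any $\frac1n$ slack from the chain-rule bookkeeping.
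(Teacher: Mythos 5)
Your overall route is the paper's: Fano plus Lemma~\ref{lem:dist:MI}, the split $\mu R_1+R_2=(\mu-1)R_1+(R_1+R_2)$, Gaussian maximum entropy against the empirical covariances $K_i$, worst-case signs $\eta_{ji}=+1$ yielding the $2\sqrt{\bar\gamma\bar\beta P_1P_2}$-type terms, and Jensen to pass to the block averages \eqref{eq:gam:bet}. Two steps need repair, however. First, starting from the unconditioned quantities $I(W_1;Y^n)$ and $I(W_1,W_2;Y^n)$ does not work: expanding $I(W_1;Y^n)=h(Y^n)-h(Y^n|W_1)$ leaves you needing a \emph{lower} bound on $h(Y^n|W_1)$, which is not available, and your sketch of "$h(Y_i\mid W_1)$ bounded via the covariance of $Y_i$ given $X_{1i}$" does not supply one. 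The paper instead uses the independences $W_1\independent(X_2^n,S^n)$ and $(W_1,W_2)\independent S^n$ to write $H(W_1)=H(W_1|X_2^n,S^n)$ and $H(W_1,W_2)=H(W_1,W_2|S^n)$ \emph{before} applying Fano, so the rate terms become $I(W_1;Y^n|X_2^n,S^n)=h(Y^n|X_2^n,S^n)-h(Z^n)$ and $I(W_1,W_2;Y^n|S^n)=h(Y^n|S^n)-h(Z^n)$; only with this genie conditioning do the conditional variances $\gamma_iP_{1i}$ and $\gamma_iP_{1i}+\beta_iP_{2i}$ (and hence $R_1(\gamma)$ and $R_{sum}(\gamma,\beta)$) emerge.

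Second, you locate the use of $\lambda\le1$ in the wrong place. The map $(\gamma,\beta)\mapsto\frac{\lambda}{2}\log\frac{Q}{D(\gamma,\beta)}$ is concave for \emph{every} $\lambda\ge0$ (it is $\lambda$ times a fixed concave function; cf.\ Lemma~\ref{lem:concavityN}), so that is not where the restriction bites. The restriction enters when the distortion term is merged with the sum-rate term: $I(W_1,W_2;Y^n|S^n)+\lambda I(S^n;Y^n)=\lambda h(Y^n)+(1-\lambda)h(Y^n|S^n)-h(Z^n)$, and only for $\lambda\le1$ are both weights nonnegative, so that $h(Y_i)$ and $h(Y_i|S_i)$ can each be \emph{upper}-bounded by their Gaussian values \eqref{entr3}--\eqref{entr4} (the paper additionally invokes, from \cite{sutivong2005channel}, that a single jointly Gaussian choice of $(X_{1i}+X_{2i},S_i)$ maximizes both simultaneously), and so that the subsequent Jensen step applies with nonnegative coefficients on each concave logarithm. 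For $\lambda>1$ the weight on $h(Y_i|S_i)$ turns negative and the argument collapses, which is precisely why Cases 2 and 3 of the converse are treated separately.
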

\begin{IEEEproof}
The proof is given in Appendix~\ref{sec:app:lem:sum}.
\end{IEEEproof}

\begin{lemma} \label{lem:extreme:points}
For $\lambda>1$, the function  $f(\gamma, \beta):=R_{sum}(\gamma, \beta) 
	+ \frac 1 2 \log \frac Q {D(\gamma,\beta)}$
is a non-increasing function in each of the arguments, i.e. for $\gamma \in [0,1]$ and $\beta \in [0,1]$.
\end{lemma}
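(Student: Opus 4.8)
The plan is to show that $f(\gamma,\beta) = R_{sum}(\gamma,\beta) + \tfrac12\log\frac{Q}{D(\gamma,\beta)}$ has nonpositive partial derivatives in $\gamma$ and $\beta$ on $[0,1]^2$. By symmetry of the expressions in the two senders (swap $(\gamma,P_1)\leftrightarrow(\beta,P_2)$), it suffices to treat $\partial f/\partial\gamma$; the $\beta$ case is then verbatim with the labels exchanged. Writing out the two pieces,
\begin{align*}
R_{sum}(\gamma,\beta) &= \tfrac12\log\!\Big(1+\tfrac{\gamma P_1+\beta P_2}{\vZ}\Big) = \tfrac12\log\!\Big(\tfrac{\vZ+\gamma P_1+\beta P_2}{\vZ}\Big),\\
\tfrac12\log\tfrac{Q}{D(\gamma,\beta)} &= \tfrac12\log\!\frac{P_1\!+\!P_2\!+\!Q\!+\!\vZ\!+\!2\sqrt{\bar\gamma P_1 Q}\!+\!2\sqrt{\bar\beta P_2 Q}\!+\!2\sqrt{\bar\gamma\bar\beta P_1 P_2}}{\vZ+\gamma P_1+\beta P_2},
\end{align*}
so the two denominators $\vZ+\gamma P_1+\beta P_2$ cancel and $2f(\gamma,\beta) = \log\big(N(\gamma,\beta)/\vZ\big)$, where $N(\gamma,\beta) := P_1+P_2+Q+\vZ+2\sqrt{\bar\gamma P_1 Q}+2\sqrt{\bar\beta P_2 Q}+2\sqrt{\bar\gamma\bar\beta P_1 P_2}$. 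This is the key simplification: after the cancellation, $f$ no longer depends on $\lambda$ at all (the hypothesis $\lambda>1$ is only what lets us split off one copy of $\tfrac12\log\frac{Q}{D_n}$ with coefficient exactly $1$ in the calling argument — inside this lemma it plays no role), and monotonicity of $f$ reduces to monotonicity of $N$.

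It remains to check that $N(\gamma,\beta)$ is non-increasing in $\gamma$. Since $\log$ is increasing and $\vZ$ is a positive constant, $2f$ is non-increasing in $\gamma$ iff $N$ is. Differentiate: with $\bar\gamma=1-\gamma$,
\begin{align*}
\frac{\partial N}{\partial\gamma} = 2\sqrt{P_1 Q}\cdot\frac{-1}{2\sqrt{\bar\gamma}} + 2\sqrt{\bar\beta P_1 P_2}\cdot\frac{-1}{2\sqrt{\bar\gamma}} = -\frac{1}{\sqrt{\bar\gamma}}\Big(\sqrt{P_1 Q}+\sqrt{\bar\beta P_1 P_2}\Big) \le 0,
\end{align*}
for every $\gamma\in[0,1)$ and $\beta\in[0,1]$, with the derivative $\to-\infty$ as $\gamma\to1^-$ but $N$ itself continuous and finite at $\gamma=1$. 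Hence $N(\cdot,\beta)$ is (strictly, unless $P_1=0$) decreasing on $[0,1]$, and therefore $f(\cdot,\beta)$ is non-increasing on $[0,1]$; by the symmetric computation $f(\gamma,\cdot)$ is non-increasing on $[0,1]$ as well. This proves the claim.

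I do not expect a genuine obstacle here — the lemma is essentially a bookkeeping observation. The one thing to be careful about is the cancellation step: one must verify that the RHS of \eqref{eq:distc} really is of the form $Q\cdot(\vZ+\gamma P_1+\beta P_2)/N(\gamma,\beta)$ so that $\log\frac{Q}{D}$ contributes exactly $\log N - \log(\vZ+\gamma P_1+\beta P_2)$ and the state-power term in $R_{sum}$ is exactly cancelled; this is immediate from the displayed form of \eqref{eq:distc}. A minor point worth a sentence in the writeup is the non-differentiability of $N$ at the boundary $\gamma=1$ (resp.\ $\beta=1$): since we only claim ``non-increasing'' and $N$ is continuous on the closed square and differentiable on the interior with nonpositive derivative, monotonicity on the closed interval follows without needing one-sided derivatives at the endpoints.
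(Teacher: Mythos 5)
Your proof is correct and takes essentially the same route as the paper, whose own proof merely asserts the monotonicity ``by inspection''; your cancellation of the $\vZ+\gamma P_1+\beta P_2$ factors, reducing $2f(\gamma,\beta)$ to $\log\bigl(N(\gamma,\beta)/\vZ\bigr)$ with $N$ manifestly non-increasing, is precisely the computation that justifies that assertion (and is needed, since $R_{sum}$ alone is increasing in $\gamma,\beta$ while the distortion term is decreasing, so the claim is not obvious without the cancellation). Your side remark that the hypothesis $\lambda>1$ plays no role inside the lemma itself is also accurate.
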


\begin{IEEEproof}
Notice that $D(\gamma, \beta)$ increases with $\gamma$ (or $\beta$), see \eqref{eq:distc}.
Furthermore, a simple inspection shows that the function $f(\gamma, \beta)$ is decreasing in 
each of the arguments. 
\end{IEEEproof}

Let us now consider the different regimes for $\lambda$ as in Figure~\ref{fig:lam:mu}.

\noindent \textbf{Case 1} ($\lambda \leq 1   \text{ and } \mu \geq 1$): In this regime, Lemma~\ref{lem:sum:bnd}
directly gives a bound on the weighted sum-rate.

\noindent \textbf{Case 2} ($\lambda \geq \mu \text{ and } \mu \geq 1$): This requires a slightly different approach than above. Since $\mu \geq 1$, we can write
\begin{align}
\mu R_1 &+ R_2 + \frac \lambda 2 \log \frac Q{D_n} \notag \\ 
	&\leq \mu R_1 +   \mu R_2 + \frac \lambda 2 \log \frac Q{D_n} \notag \\
	&= \mu (R_1 + R_2) + \frac \lambda 2 \log \frac Q{D_n} \notag \\
	&= \mu (R_1 + R_2  +  \frac 1 2 \log \frac Q{D_n}) + \frac{\lambda-\mu}2 \log \frac Q{D_n} \notag \\
	&\stackrel{(a)}{\leq} \mu \left(R_{sum}(0,0) +  \frac 12 \log \frac Q{D(0,0)}\right) + \frac {\lambda-\mu}2 \log \frac Q{D_n} \notag \\
	&\stackrel{(b)}{\leq} 0 + \frac \mu 2 \log \frac Q{D(0,0)} + \frac {\lambda-\mu}2 \log \frac Q{D(0,0)}. \label{eq:up:b}
\end{align}
In step~$(a)$ we used Lemma~\ref{lem:sum:bnd} followed by Lemma~\ref{lem:extreme:points}, and $(b)$ follows from the fact that the minimal
distortion possible is obtained by uncoded transmission of the state by the two users acting as a super-user with power $(\sqrt{P_1}+\sqrt{P_2})^2$~\cite{sutivong2005channel}. In other words,
an equivalent point-to-point channel results in the absence of messages at both the transmitters.

\noindent \textbf{Case 3} ($1\leq \lambda \leq \mu \text{ and } \mu\geq 1$): Here also we modify an
appropriate hyperplane by changing the weight on $R_2$, but this time with respect to the 
weight $\lambda$ on the distortion. More specifically, since $\lambda \geq 1$
\begin{align} 
\mu R_1 &+ R_2 + \frac \lambda 2 \log \frac Q{D_n} \notag \\ 
	&\leq \mu R_1 + \lambda R_2 +  \frac \lambda 2 \log \frac Q{D_n} \notag \\ 
	&= (\mu - \lambda) R_1 + \lambda (R_1 + R_2 + \frac 12 \log \frac Q{D_n}) \notag \\
	&\leq (\mu - \lambda) R_1 + \lambda \left(R_{sum}(\gamma,0) + \frac 12 \log \frac Q{D(\gamma,0)}\right),
	\notag
\end{align} 
where the last step used Lemmas~\ref{lem:sum:bnd} and \ref{lem:extreme:points}. From \eqref{eq:rate:alone:11}, we can infer that $R_1$ is at most $\frac 12 \log (1+\gamma P/\sigma_Z^2)$. Thus,
\begin{align} \label{eq:up:c}
\mu R_1 &+ R_2 + \frac \lambda 2 \log \frac Q{D_n} \leq \mu R_{sum}(\gamma,0) +  \frac \lambda 2 \log \frac Q{D(\gamma,0)} .
\end{align}
Let us now show that bounds \eqref{eq:up:a} -- \eqref{eq:up:c} indeed define the region given in 
Theorem~\ref{thm:main}.

\section{Equivalence of Inner and Outer Bounds}~\label{sec:equivalence}
In this section, we prove that the respective regions defined by the inner and outer bounds 
in Sections \ref{sec:achieve} and \ref{sec:conv} coincide, thereby establishing the 
capacity region. As before, for each value of the weight $\mu \geq 1$, we will consider
three different regimes for $\lambda \geq 0$, and show that the maximal value of
$\mu R_1 + R_2 + \frac \lambda 2 \log \frac QD$ in the outerbound can be achieved.
Before we embark on this, a numerical example is in order. 

Let us take  $P_1 = 2,P_2 = 2,\vZ = 1,Q = 1$.  The optimal 
trade-off is plotted in Figure~\ref{fig:3Dc}, where $(x,y,z)$-axis 
have $(R_1, R_2, \log \frac QD)$ values. In the first quadrant of $\mathbb R^3$,
$6$ distinct \emph{faces} to the region can be identified. Notice the 
oblique face along $z-$ axis in the plot. This face is a pentagon, corresponding to the
maximal distortion, however this does not coincide with $(x-y)$ plane in the plot shown (the plot begins at $z=z_m$).
%
This is to emphasize the fact that even if we care only about
optimizing the transmission rates, still some reduction 
of distortion from its maximal possible value of $Q$ can be achieved. 
This can also be seen from the employed DPC scheme. Essentially
the maximal distortion $D_{max} < Q$ can be achieved while operating
at the maximal sum-rate.  
In principle, the extreme pentagon at $z =  \log \frac Q{D_{max}}$  in the current plot
can be extended all the way to $z=0$.  Notice that 
any other cross-section along $z-$axis in the interior of the plot
is not even a polytope, see Figure~\ref{fig:cross}. 

\begin{figure}[h]
\centering
\includegraphics[scale=0.6]{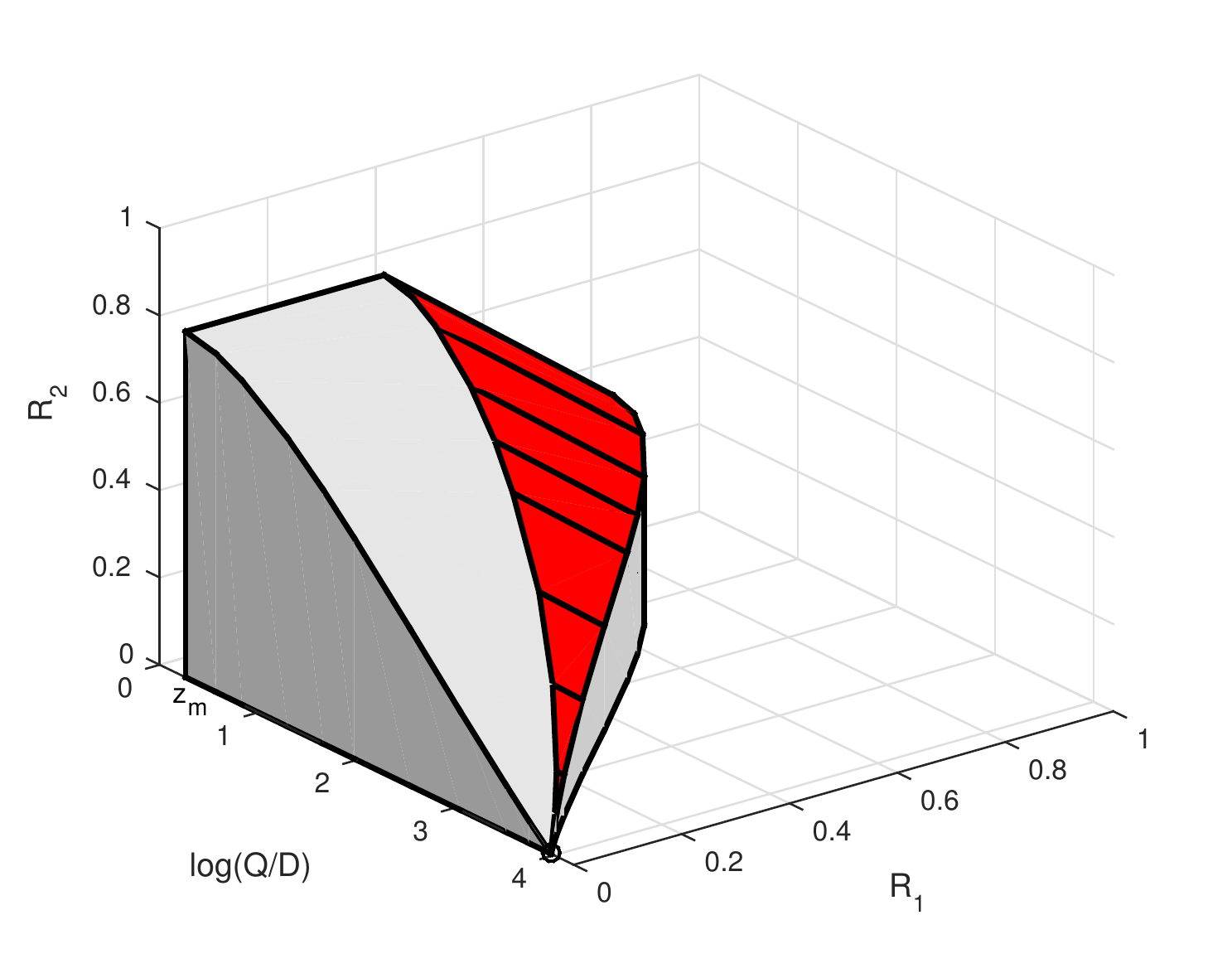}
\caption{Plot of $\sCapa$ for $P_1\!=\!2,P_2\!=\!2,\vZ\!=\!1,Q\!=\!1$.}
\label{fig:3Dc}
\end{figure}

The respective faces intersecting $x-z$ and $y-z$ axis represent the single user
tradeoff between estimation error and communication rate, when only one of
the users is present~\cite{sutivong2005channel}. Notice the three remaining faces
in the interior of $\mathbb R^3$. The middle one (striped, red) is a collection
of lines, corresponding to the sum-rate constraints at different values of distortion. 
The other two surfaces are curved, Figure~\ref{fig:cross} illustrates this
using a cross section for a given $D$ value.

\begin{figure}
\centering
\includegraphics[scale=0.6]{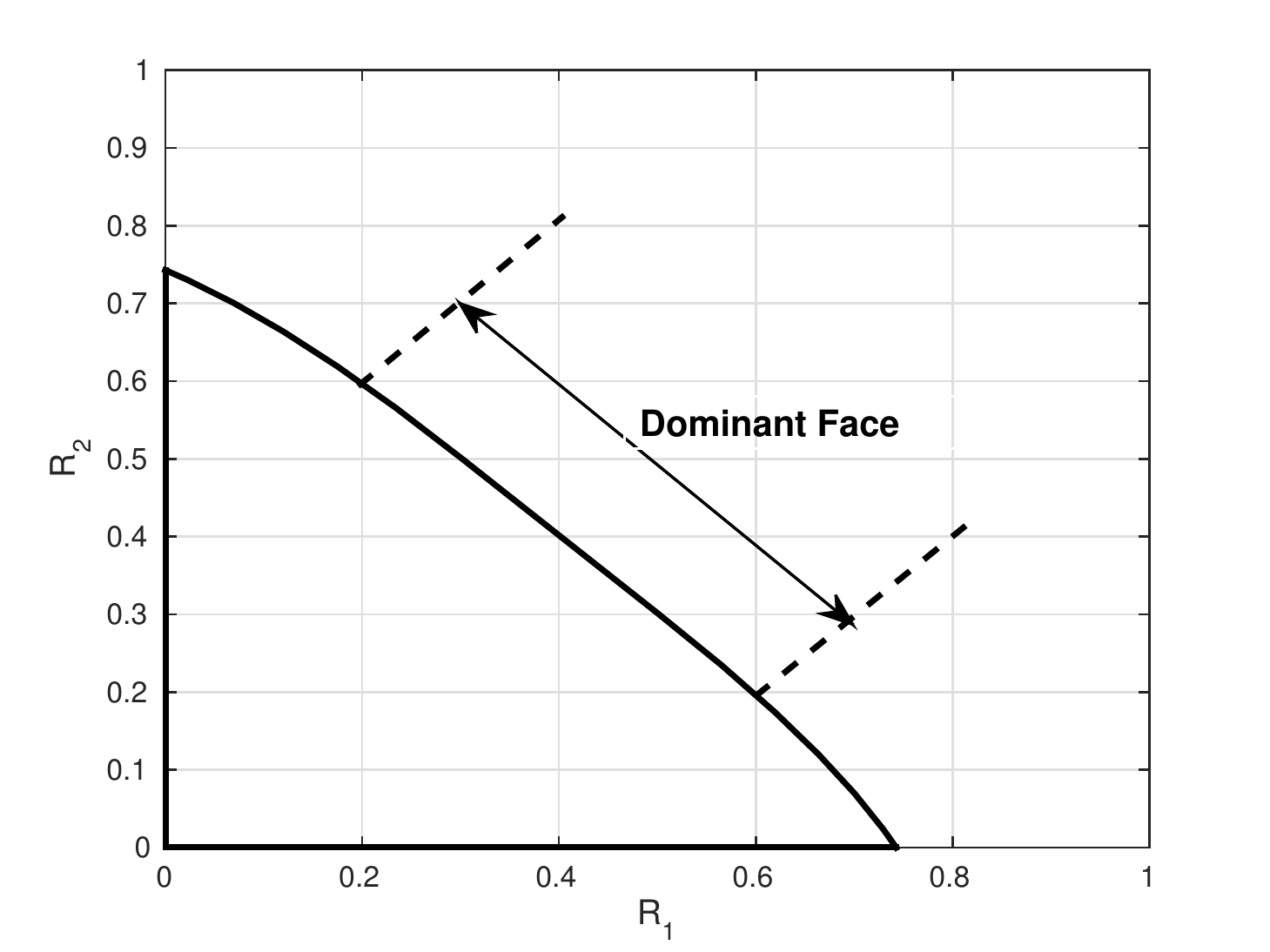}
\caption{Cross-section for $D=0.66$ in the example\label{fig:cross}}
\end{figure}

Let us now generalize our observations from the example. 
While maximizing $\mu R_1 + R_2 
+ \frac \lambda 2 \log (1+ \frac Q{D_n})$, we already showed that $\lambda \geq \mu$
corresponds to an extreme point where the sum-rate is zero (Case $2$ in Section~\ref{sec:conv}).
Clearly, the corresponding distortion lower bound $D(0,0)$ for this case
can be achieved by uncoded transmission of the state by both the transmitters, using all
the available powers.  Thus, the condition $\lambda=\mu$ subsumes all $\lambda \geq \mu$.
Furthermore the regime $1\leq \lambda \leq \mu$ (Case~$3$ of 
Section~\ref{sec:conv}) corresponds to the case where $R_2=0$. This implies that 
we need to only consider $\lambda=1$ instead of $\lambda \in [1,\mu)$. Notice 
that the region with $R_2=0$ matches the single user results of \cite{sutivong2005channel}, albeit for a state process with variance $(\sqrt{P_2}+\sqrt{Q})^2$. 
This leaves us with showing achievable schemes for those cases in which $0 < \lambda < 1$.  
The following lemma holds the key in this regime.

\begin{lemma} \label{lem:concave}
For $(0 < \lambda <1, \, \mu \geq 1)$, the function $f(\gamma,\beta) := (\mu - 1) R_1(\gamma) + R_{sum}(\gamma, \beta)
	+ \frac \lambda 2 \log \frac Q{D(\gamma,\beta)}$ is jointly strictly 
concave in $(\gamma, \beta)$ for $0\leq \gamma \leq  1$ and $0 \leq \beta \leq 1$.
\end{lemma}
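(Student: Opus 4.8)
The plan is to establish joint strict concavity of $f(\gamma,\beta)$ on $[0,1]^2$ by computing its Hessian and showing it is negative definite in the interior. Write $f = f_1 + f_2 + f_3$ where $f_1(\gamma) = (\mu-1)R_1(\gamma) = \frac{\mu-1}{2}\log(1+\gamma P_1/\vZ)$, $f_2(\gamma,\beta) = R_{sum}(\gamma,\beta) = \frac12\log\big(1 + (\gamma P_1 + \beta P_2)/\vZ\big)$, and $f_3(\gamma,\beta) = \frac\lambda2\log\frac{Q}{D(\gamma,\beta)}$. The first term $f_1$ depends only on $\gamma$ and is a $\log$ of an affine increasing function, hence concave in $\gamma$ (strictly, since $\mu \geq 1$; if $\mu = 1$ it vanishes). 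The second term $f_2$ is a $\log$ of an affine function of $(\gamma,\beta)$, hence concave but not strictly — its Hessian is rank-one negative semidefinite. So the entire burden of \emph{strict} joint concavity falls on the interaction between $f_2$ and $f_3$, and in particular on showing $f_3$ is strictly concave in the direction transverse to the null direction of $f_2$'s Hessian. That transverse analysis is the main obstacle.

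First I would get a clean expression for $f_3$. From \eqref{eq:distc}, $\log\frac{Q}{D(\gamma,\beta)} = \log\big(P_1+P_2+Q+\vZ + 2\sqrt{\bar\gamma P_1 Q} + 2\sqrt{\bar\beta P_2 Q} + 2\sqrt{\bar\gamma\bar\beta P_1 P_2}\big) - \log(\vZ + \gamma P_1 + \beta P_2)$. The subtracted term is exactly $-\log$ of the affine function appearing (up to the constant $\vZ$) in $f_2$; combining, $f_2 + f_3$ has a $\log\big(\vZ+\gamma P_1+\beta P_2\big)$ contribution with coefficient $\frac{1-\lambda}{2} > 0$ (since $\lambda < 1$), which is \emph{concave} in $(\gamma,\beta)$, plus the term $\frac\lambda2\log g(\gamma,\beta)$ where $g(\gamma,\beta) = P_1+P_2+Q+\vZ + 2\sqrt{\bar\gamma P_1 Q} + 2\sqrt{\bar\beta P_2 Q} + 2\sqrt{\bar\gamma\bar\beta P_1 P_2}$. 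So it suffices to show: (i) the coefficient-$\frac{1-\lambda}{2}$ term is concave (immediate, same argument as $f_2$, now with strictly positive weight — but still only rank-one semidefinite), and (ii) $\frac\lambda2\log g(\gamma,\beta)$ is concave, and moreover that the sum of the two Hessians is strictly negative definite.

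For (ii), I would substitute $u = \sqrt{\bar\gamma}$, $v = \sqrt{\bar\beta}$ (so $u,v \in [0,1]$, and $\gamma = 1-u^2$, $\beta = 1-v^2$), under which $g$ becomes $g = (P_1 + P_2 + Q + \vZ) + 2u\sqrt{P_1 Q} + 2v\sqrt{P_2 Q} + 2uv\sqrt{P_1 P_2}$, which is a \emph{bilinear-plus-affine} function of $(u,v)$ — much more tractable. The key structural fact is that $g$, as a function of $(u,v)$, can be written as $g = \big(\sqrt{Q} + u\sqrt{P_1} + v\sqrt{P_2}\big)^2 + \vZ$ (this is exactly the identity underlying the achievability MMSE computation and the form in \eqref{eq:distc:N}). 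Hence $\log g(u,v) = \log\big((\sqrt Q + u\sqrt{P_1} + v\sqrt{P_2})^2 + \vZ\big)$, a $\log$ of (square of affine, plus constant). Writing $\ell(u,v) = \sqrt Q + u\sqrt{P_1} + v\sqrt{P_2}$, one checks $\log(\ell^2 + \vZ)$ is concave in $(u,v)$ when $\ell \geq 0$ (its Hessian is $\frac{2\vZ - 2\ell^2}{(\ell^2+\vZ)^2}$ times the rank-one form $\nabla\ell\nabla\ell^{\mathsf T}$ plus lower-order — I would verify the sign carefully; concavity holds provided $\ell^2 \le \vZ$ is not required, rather the relevant quadratic form works out because the competing positive-weight concave term dominates). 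Then I would change variables back, noting the map $(\gamma,\beta)\mapsto(u,v) = (\sqrt{1-\gamma},\sqrt{1-\beta})$ is a diffeomorphism on the open square with a diagonal Jacobian, and composition with a strictly concave function need not preserve concavity in general — so instead I would carry out the Hessian computation directly in the original $(\gamma,\beta)$ coordinates, using the $(u,v)$ form only as a guide for the algebra. The hard part, concretely, is verifying that the $2\times 2$ Hessian of $f$ has strictly negative determinant (not merely nonpositive) throughout $(0,1)^2$: the off-diagonal interaction from $\sqrt{\bar\gamma\bar\beta P_1 P_2}$ couples the two variables, and one must show it does not cancel the diagonal curvature. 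I expect this to reduce, after clearing denominators, to a polynomial inequality in $\gamma,\beta$ and the (positive) parameters $P_1,P_2,Q,\vZ,\lambda$ that is manifestly true term-by-term; once (ii) gives strict concavity of the $g$-term in the coupled direction and (i) gives concavity in the complementary direction, joint strict concavity of $f$ follows, with the $f_1$ term (when $\mu>1$) providing additional strict curvature in $\gamma$.
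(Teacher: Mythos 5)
Your opening decomposition is exactly the one the paper uses: after cancellation, $f(\gamma,\beta)=\frac{\mu-1}{2}\log\bigl(1+\frac{\gamma P_1}{\vZ}\bigr)+\frac{1-\lambda}{2}\log\bigl(1+\frac{\gamma P_1+\beta P_2}{\vZ}\bigr)+\frac{\lambda}{2}\log\frac{L(\gamma,\beta)}{\vZ}$ with $L(\gamma,\beta)=P_1+P_2+Q+\vZ+2\sqrt{\bar\gamma P_1Q}+2\sqrt{\bar\beta P_2Q}+2\sqrt{\bar\gamma\bar\beta P_1P_2}$, and you correctly observe that $\lambda<1$ makes the middle coefficient positive. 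But the step you yourself call the main obstacle --- joint concavity of $\frac{\lambda}{2}\log L$ --- is never established: it is deferred to a Hessian-determinant computation that you only ``expect'' to reduce to a manifestly true polynomial inequality. The detour through $(u,v)=(\sqrt{\bar\gamma},\sqrt{\bar\beta})$ does not rescue this, for two reasons. First, the identity you invoke is off: $L=\bigl(\sqrt Q+u\sqrt{P_1}+v\sqrt{P_2}\bigr)^2+\vZ+\gamma P_1+\beta P_2$, not $\ell^2+\vZ$. Second, even granting it, $\log(\ell^2+\vZ)$ is \emph{not} concave in $\ell$ where $\ell^2<\vZ$ (its second derivative is $2(\vZ-\ell^2)/(\ell^2+\vZ)^2>0$ there, which is attainable for $\gamma,\beta$ near $1$ and $Q$ small), so the claim you flag for ``careful verification'' is false as stated --- and, as you note, concavity would not transfer back through the nonlinear substitution anyway. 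So the proposal identifies the right target but does not hit it.

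The missing idea is much simpler and needs no Hessian: $L(\gamma,\beta)$ is \emph{itself} concave in $(\gamma,\beta)$, being a positive constant plus a nonnegative combination of $\sqrt{1-\gamma}$, $\sqrt{1-\beta}$ and $\sqrt{(1-\gamma)(1-\beta)}$; the first two are concave, and the geometric mean $\sqrt{xy}$ is jointly concave, a property preserved under the affine substitutions $x\mapsto 1-\gamma$, $y\mapsto 1-\beta$. Since $\log$ is nondecreasing and concave and $L>0$, the composition $\log L$ is concave --- this is precisely the paper's argument (stated in $N$-variable form as Lemma~\ref{lem:concavityN}), and it disposes in one line of the cross term $\sqrt{\bar\gamma\bar\beta P_1P_2}$ that you single out as the danger. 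Strictness then comes from the two logarithmic terms: for $\mu>1$ and $0<\lambda<1$ their Hessians contribute $-c_1(P_1,0)(P_1,0)^{\T}-c_2(P_1,P_2)(P_1,P_2)^{\T}$ with $c_1,c_2>0$, which is already negative definite since the two gradient directions span $\mathbb R^2$; when $\mu=1$ the strict concavity of $\sqrt{1-\gamma}+\sqrt{1-\beta}$ inside $L$ supplies the remaining direction.
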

\begin{IEEEproof}
The proof is relegated to Appendix~\ref{sec:app:concave}.
\end{IEEEproof}
Since we know that $\mu R_1 + R_2 + \frac \lambda 2 \log \frac Q{D_n} \leq f(\gamma, \beta)$ 
for some value of $(\gamma,\beta) \in [0,1]^2$, the strict concavity of $f(\cdot)$ suggests that
for the given $\mu >  1$ and $0 \leq \lambda \leq 1$,
there is a unique $(\gamma, \beta)$ for which $\mu R_1 + R_2 + \frac \lambda 2 \log \frac Q{D_n}$
is maximized. Clearly, choosing the maximizing parameters $\gamma, \beta$ in our achievable theorem
will give us the same operating point. Reversing the roles of $R_1$ and $R_2$, we have covered 
the whole region, except  when $\mu=1$. Anticipating the end-result, we will call the extremal surface
for $\mu=1$ as the dominant face of the plot, somewhat abusing the term \emph{face}.
Clearly $R_{sum}(\gamma, \beta) 
+ \frac \lambda 2 \log \frac Q{D(\gamma, \beta)}$ is a strictly concave function, and hence maximized
at a unique value of $(\gamma, \beta)$. Thus for a given value of distortion, the dominant line
simply connects the points $A_1$ and $A_2$ given by
\begin{align}
A_1&=\left(\frac{1}{2} \log\left(1+\frac{\gamma P_1}{\vZ}\right),\frac{1}{2} \log\left(1+\frac{\beta P_2}{\gamma P_1+\vZ}\right)\right)  \notag\\
A_2&=\left(\frac{1}{2} \log\left(1+\frac{\gamma P_1}{\beta P_2+\vZ}\right),\frac{1}{2} \log\left(1+\frac{\beta P_2}{\vZ}\right)\right)
\end{align}
for appropriate $(\gamma, \beta) \in [0,1]^2$. Evidently, each rate-pair in the dominant line is achievable by our communication scheme. This completes the proof of Theorem~\ref{thm:main}.

Let us now turn our attention towards a multi-user MAC with or without state. 
\section{Message and Source Communication for a $N-$sender GMAC}\label{sec:Nuser}
In the $N-$sender model all the transmitting nodes
observe the same source process. They should help the receiver estimate the state process. In addition
each node may have an independent stream of messages to be communicated to the base station. 
Theorem~\ref{thm:main:2} gives the optimal trade-off region. We prove this theorem below.

\subsection{Achievable Scheme}
The proof of achievability follows the same lines as before, via power sharing, dirty 
paper coding and MMSE estimation. In particular we choose
\begin{align}
X_{i \textrm{s} j} = \sqrt{\frac{(1-\gamma_i)P_i}{Q}}S_j, \:\:1 \leq i \leq N, \:\: 1 \leq j \leq n,
\end{align}
and choose $X_{ij} = X_{i \textrm{s} j} + X_{i \textrm{m} j}$, where $X_{i \textrm{m} j}\sim \mathcal N(0,\gamma_iP_i)$ 
is dirty paper coded. Suppose we employ 
successive cancellation at the decoder. Then user~$i$ does DPC to generate
 $X_{i \textrm{m}}^n$, treating $\sum_{i=1}^N X_{i \textrm{s}}^n + \alpha S^n$ as the known dirt.
The effective interference for user~$i$ is $\sum_{i \in S_i} X_{i \textrm{m}}^n + Z^n$, where $S_i$ is the
set of users decoded after user~$i$ by the successive cancellation decoder. 
Once all the messages are decoded, these signals are removed from the received symbols, and
state estimation is done using a linear MMSE estimate. Taking different user
permutations for successive cancellation, and further time-sharing 
will give the rate region given in Theorem~\ref{thm:main:2}. These straightforward
computations are omitted here.

\subsection{Converse Bound}
We now prove the converse. We are interested in maximizing
\begin{equation}
\max \sum_{i=1}^N\mu_i R_i + \frac{\lambda}{2} \log \frac{Q}{D_n},
\end{equation}
for $\mu_i \geq 0$ and $\lambda \geq 0$. Without loss of generality, assume that $\mu_i \leq \mu_{i-1}, \forall i$.
As we did for the two user case, since $\lambda \geq 0$ is arbitrary, 
by suitable scaling, we can take $\mu_N = 1$, and $\mu_i \geq 1, 1 \leq i \leq N-1$. 

Then the following three regimes arise.
\begin{itemize}
\item $0\leq \lambda \leq 1$
\item $\lambda \geq \mu_1$
\item $\mu_j \leq \lambda \leq \mu_{j-1}$ for some $j \in \{2,\cdots,N\}$.
\end{itemize}
Let us first  extend Lemma~\ref{lem:sum:bnd} to $N-$ senders. To this end, define
for $i=1,\cdots,N$,
\begin{equation} \label{eq:rsum:gami}
R_{sum}(\gamma_1, \cdots, \gamma_i) := \frac 12 \log (1 + \sum_{j=1}^i \gamma_j P_j).
\end{equation}
Also let $D(\gamma_1, \cdots, \gamma_N)$ denote the RHS of \eqref{eq:distc:N}.
\begin{lemma} \label{lem:sum:bnd:N}
For $\lambda \leq 1$, $\mu_{N+1}=0$, and  $1=\mu_N\leq \mu_{N-1}\leq ,\cdots, \leq\mu_1$,
\begin{multline} 
\sum_{i=1}^N \mu_i R_i  +\frac{\lambda}{2} \log \frac{Q}{D_n} \\ 
\leq  \sum_{i=2}^{N+1} (\mu_{i-1}-\mu_{i})R_{sum}(\gamma_1,\cdots,\gamma_i) \\
\phantom{ww}+\frac{\lambda}{2} \log \frac{Q}{D(\gamma_1,\cdots,\gamma_N)}+o(1). \label{eq:lemNuser}
\end{multline}
\end{lemma}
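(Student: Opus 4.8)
The plan is to mirror the two-user argument behind Lemma~\ref{lem:sum:bnd}, but to drive it by a summation-by-parts (Abel) rearrangement of the weighted rate sum. First I would apply Fano's inequality to every prefix $\{1,\dots,k\}$ to get $n\sum_{i=1}^{k}R_i \le I(W_1,\dots,W_k;Y^n)+n\epsilon_n$, and Lemma~\ref{lem:dist:MI} to get $\tfrac n2\log\tfrac{Q}{D_n}\le I(S^n;Y^n)$. Since the $W_j$ are mutually independent and independent of $S^n$, adjoining $S^n$ and then $X^n_{k+1},\dots,X^n_N$ as genie side information, followed by one data-processing step, yields $I(W_1,\dots,W_k;Y^n)\le I(X^n_1,\dots,X^n_k;Y^n\mid X^n_{k+1},\dots,X^n_N,S^n)=h(Y^n\mid X^n_{k+1},\dots,X^n_N,S^n)-h(Z^n)$, the same conditioning as in the stateless Gaussian-MAC converse, only with $S^n$ attached. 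Because $1=\mu_N\le\mu_{N-1}\le\cdots\le\mu_1$, writing $\mu_i=\sum_{k\ge i}(\mu_k-\mu_{k+1})$ with $\mu_{N+1}=0$ turns $\sum_i\mu_iR_i$ into the nonnegative combination $\sum_{k=1}^{N}(\mu_k-\mu_{k+1})\sum_{i=1}^{k}R_i$ of the prefix sums.

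Collecting terms, $\sum_i\mu_iR_i+\tfrac\lambda2\log\tfrac{Q}{D_n}$ is, up to an $o(n)$ slack, bounded by $\sum_{k=1}^{N-1}(\mu_k-\mu_{k+1})\,h(Y^n\mid X^n_{k+1},\dots,X^n_N,S^n)+(1-\lambda)\,h(Y^n\mid S^n)+\lambda\,h(Y^n)-\mu_1 h(Z^n)$, where the coefficient $1-\lambda$ of $h(Y^n\mid S^n)$ arises from merging the $k=N$ prefix-rate term (coefficient $\mu_N-\mu_{N+1}=1$) with the $-\lambda\,h(Y^n\mid S^n)$ produced by the distortion bound. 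This is exactly where $\lambda\le1$ enters: it keeps that coefficient nonnegative, so every entropy term carries a nonnegative weight and the Gaussian maximum-entropy fact \eqref{eq:gauss:ent} applies term by term. I would then single-letterize (conditioning reduces entropy, then \eqref{eq:gauss:ent} per coordinate) and use that, given $S^n$, the inputs $X^n_1,\dots,X^n_N$ are mutually independent -- the $N$-user analogue of $X^n_1\to S^n\to X^n_2$ -- to decompose the relevant (LMMSE) conditional variances additively, as in Section~\ref{sec:conv}. With per-symbol covariance parameters $\gamma_{ji}\in[0,1]$ and $P_{ji}$ defined via $\Var[X_{ji}\mid S_i]=\gamma_{ji}P_{ji}$ and $\E[X_{ji}S_i]=\eta_{ji}\sqrt{(1-\gamma_{ji})P_{ji}Q}$, this produces $\Var[\sum_{j\le k}X_{ji}\mid X_{k+1,i},\dots,X_{N,i},S_i]\le\sum_{j\le k}\gamma_{ji}P_{ji}$ and $\Var(Y_i)\le\big(\alpha\sqrt Q+\sum_{j}\sqrt{(1-\gamma_{ji})P_{ji}}\big)^2+\sum_j\gamma_{ji}P_{ji}+\vZ$, the last step taking all correlation signs $\eta_{ji}=+1$, which only enlarges the bound. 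The resulting per-coordinate expression has precisely the algebraic shape of the claimed right-hand side, with $(\gamma_{ji},P_{ji})$ in the roles of $(\gamma_j,P_j)$.

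It then remains to pass from per-coordinate parameters to the block averages $\gamma_j=\tfrac1{nP_j}\sum_i\gamma_{ji}P_{ji}$ of \eqref{eq:gam:bet} (generalized). Writing each coordinate term through the message powers $q_{ji}=\gamma_{ji}P_{ji}$ and total powers $P_{ji}$, the rate pieces are logarithms of affine functions of the $q_{ji}$, hence concave, and the distortion piece is $\tfrac\lambda2\log\!\big(\vZ+\sum_j q_{ji}+(\alpha\sqrt Q+\sum_j\sqrt{P_{ji}-q_{ji}})^2\big)$, a logarithm of a concave function of $(q_{ji},P_{ji})$: expanding the square $(\alpha\sqrt Q+\sum_j\sqrt{P_{ji}-q_{ji}})^2$ produces sums of square roots and geometric means $\sqrt{(P_{ji}-q_{ji})(P_{j'i}-q_{j'i})}$, all concave, since $\alpha\sqrt Q\ge0$. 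Jensen's inequality then replaces $q_{ji}$ by $\gamma_jP_j$ and $P_{ji}$ by $\tfrac1n\sum_iP_{ji}$, and monotonicity of the bound in the total powers lets one replace the latter by the constraint value $P_j$. This reproduces the right-hand side of \eqref{eq:lemNuser} with $R_{sum}(\gamma_1,\dots)$ and $D(\gamma_1,\dots,\gamma_N)$, the remaining $o(1)$ coming from the Fano terms; the same concavity also underlies the $N$-user counterpart of Lemma~\ref{lem:concave}.

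I expect the main obstacle to be the bookkeeping that forces the single distortion constraint and the $N$ coupled rate constraints into one inequality. Unlike a stateless Gaussian MAC, whose region is a polymatroid so that each subset bound may be treated in isolation, here the binding object is the weighted combination, so one must (i) split the ``common'' entropy $h(Y^n\mid S^n)$ correctly between the sum-rate part and the state-amplification part -- the $(1-\lambda)$ versus $\lambda$ allocation above -- and (ii) keep the nested conditioning on $X^n_{k+1},\dots,X^n_N$ and the additive variance decomposition consistent across all $N$ prefixes simultaneously. By contrast, the final Jensen step is routine once the concavity above is established.
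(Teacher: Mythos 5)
Your proposal is correct and follows essentially the same route as the paper's proof in Appendix~\ref{sec:app:lem:sum:N}: the Abel rearrangement of $\sum_i\mu_iR_i$ into nonnegative multiples of prefix sums, Fano plus the genie $(S^n,X^n_{k+1},\dots,X^n_N)$, the $(1-\lambda)h(Y^n\mid S^n)+\lambda h(Y^n)$ split enabled by $\lambda\le 1$, the covariance parametrization with $\eta_{ji}=+1$, and a final Jensen step to pass to the block-averaged $\gamma_j$. Your extra remarks on the concavity justifying that Jensen step are consistent with (and slightly more explicit than) what the paper states.
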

\begin{IEEEproof}
The proof is relegated to Appendix~\ref{sec:app:lem:sum:N}.
\end{IEEEproof}
The following lemma can be found true by inspection.
\begin{lemma} \label{lem:decreasing}
For $\nu \in [0,1]^N$, the function $f(\nu) \triangleq R_{sum}(\nu)+
	\frac{1}{2} \log \frac{Q}{D(\nu)}$ is a non-increasing function 
in each of the arguments $\nu_i, 1 \leq i \leq N$.
\end{lemma}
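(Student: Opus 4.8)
The plan is to collapse $f$ into a single logarithm and then read off the monotonicity from a one-line derivative; this is why the lemma can be asserted ``by inspection''. Using \eqref{eq:rsum:gami} and the expression for $D(\gamma_1,\dots,\gamma_N)$ in \eqref{eq:distc:N}, set $T(\nu):=\alpha\sqrt{Q}+\sum_{j=1}^{N}\sqrt{(1-\gamma_j)P_j}\ge 0$ and $P(\nu):=\vZ+\sum_{j=1}^{N}\gamma_j P_j$, so that $\tfrac{Q}{D(\nu)}=\bigl(P(\nu)+T(\nu)^{2}\bigr)/P(\nu)$ while $R_{sum}(\nu)=\tfrac12\log P(\nu)$ up to an additive constant. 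The two logarithms in $f(\nu)=R_{sum}(\nu)+\tfrac12\log\tfrac{Q}{D(\nu)}$ then telescope, giving, again up to an additive constant,
\[
f(\nu) \;=\; \tfrac12\log\!\bigl(P(\nu)+T(\nu)^{2}\bigr) \;=\; \tfrac12\log g(\nu),
\]
where $g(\nu):=\vZ+\sum_{j}\gamma_j P_j+\bigl(\alpha\sqrt{Q}+\sum_{j}\sqrt{(1-\gamma_j)P_j}\bigr)^{2}$. Since $\log$ is increasing, it suffices to prove that $g$ is non-increasing in each coordinate $\gamma_i$.

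Next I would fix $i$ and differentiate $g$ on the open cube $(0,1)^{N}$:
\[
\frac{\partial g}{\partial \gamma_i} \;=\; P_i \;-\; \frac{P_i\,T(\nu)}{\sqrt{(1-\gamma_i)P_i}} \;=\; P_i\Bigl(1-\frac{T(\nu)}{\sqrt{(1-\gamma_i)P_i}}\Bigr).
\]
The one thing to notice is the trivial bound $T(\nu)\ge \sqrt{(1-\gamma_i)P_i}$, valid because the right-hand side is one of the nonnegative summands that make up $T(\nu)$ (and $\alpha\sqrt{Q}\ge 0$). Hence the bracket is $\le 0$, so $\partial g/\partial\gamma_i\le 0$ on the interior; continuity of $g$ on the closed cube upgrades this to monotonicity in $\gamma_i$ over all of $[0,1]$, and as $i$ was arbitrary the lemma follows.

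There is no real obstacle here; the only point worth a sentence is the face $\gamma_i=1$, where $\sqrt{(1-\gamma_i)P_i}=0$ and the displayed derivative degenerates. This is harmless --- one argues monotonicity on $(0,1)$ and passes to the endpoint by continuity, or simply observes that setting $\gamma_i=1$ deletes the $i$-th amplification term from $T(\nu)$, which can only decrease $g$ further. Operationally the statement is equally transparent: increasing $\gamma_i$ moves power at node $i$ from uncoded state amplification toward its message, which degrades the estimation term $\tfrac12\log\frac{Q}{D}$ at least as fast as it can help the (already sum-rate-saturated) term $R_{sum}$, so the two effects cannot net to an increase.
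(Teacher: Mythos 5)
Your proof is correct, and it supplies exactly the computation the paper dismisses with ``can be found true by inspection'': collapsing $R_{sum}(\nu)+\tfrac12\log\tfrac{Q}{D(\nu)}$ into $\tfrac12\log\bigl(\vZ+\sum_j\gamma_jP_j+T(\nu)^2\bigr)$ and noting that the partial derivative in $\gamma_i$ is $P_i\bigl(1-T(\nu)/\sqrt{(1-\gamma_i)P_i}\bigr)\le 0$ because $\sqrt{(1-\gamma_i)P_i}$ is itself one of the nonnegative summands of $T(\nu)$. This is the same monotonicity fact the paper asserts (cf.\ its equally terse treatment of Lemma~\ref{lem:extreme:points}), with the endpoint $\gamma_i=1$ and the degenerate case $P_i=0$ handled correctly by continuity.
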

Now we consider the various regimes involving $\lambda$.

\noindent \textbf{Case 1} ($\lambda \leq 1$): In this regime, Lemma \ref{lem:sum:bnd:N} directly gives a bound on the weighted sum-rate.

\noindent \textbf{Case 2} ($\lambda \geq \mu_1$): Since $\lambda \geq \mu_i, \forall i$ in this case, we can write
\begin{align}
\sum_{i=1}^N \mu_i R_i &+ \frac{\lambda}{2} \log \frac{Q}{D_n} \notag\\
&\leq \mu_1 \sum_{i=1}^N R_i +\frac{\lambda}{2} \log \frac{Q}{D_n} \notag\\
&= \mu_1\left(\sum_{i=1}^N R_i +\frac{1}{2} \log \frac{Q}{D_n}\right)+
	\frac{\lambda-\mu_1}{2} \log \frac{Q}{D_n} \notag\\
&\stackrel{(a)}{\leq} \mu_1\left(R_{sum}(\bar 0)+\frac{1}{2} \log \frac{Q}{D(\bar 0)}\right) \notag\\
&\phantom{wwww}+\frac{\lambda-\mu_1}{2} \log \frac{Q}{D_n} \notag\\
&\stackrel{(b)}{\leq} 0+\frac{\mu_1}{2} \log \frac{Q}{D(\bar 0)}+\frac{\lambda-\mu_1}{2} \log \frac{Q}{D(\bar 0)},
\end{align}
where $\bar 0$ is a vector of $N$ zeros. Observe that $(a)$ is implied by 
Lemmas~\ref{lem:sum:bnd:N} -- \ref{lem:decreasing},
and $(b)$ follows since $D(\bar 0)$
is the minimal distortion possible. Notice that $D(\bar 0)$ can be achieved by
uncoded communication of the state by all transmitters.


\noindent \textbf{Case 3} ($\mu_{j} \leq \lambda \leq \mu_{j-1}$):
Since $\lambda \geq \mu_j, \mu_{j+1},\cdots$ let us bound
\begin{align}
&\sum_{i=1}^N \mu_i R_i +\frac{\lambda}{2} \log \frac{Q}{D_n} \notag\\
&\leq \sum_{i=1}^{j-1} \mu_i R_i + \sum_{i=j}^N \lambda R_i + \frac{\lambda}{2} \log \frac{Q}{D_n} \notag\\
&=  \sum_{i=1}^{j-1} (\mu_i -\lambda) R_i + \lambda (\sum_{i=1}^N R_i + \frac{1}{2} \log \frac{Q}{D_n}) \notag\\
&\stackrel{(a)}\leq  \sum_{i=1}^{j-1} (\mu_i -\lambda) R_i \notag \\ &\phantom{ww}+ \lambda 
\Bigl(R_{sum}(\gamma_1, \cdots, \gamma_N) + \frac{1}{2} \log \frac{Q}{D(\gamma_1, \cdots, \gamma_N)}\Bigr) \notag\\
&\stackrel{(b)}\leq  \sum_{i=1}^{j-1} (\mu_i -\lambda) R_i \notag \\ &\phantom{ww}+ \lambda 
\Bigl(R_{sum}(\gamma_1, \cdots, \gamma_{j-1}) + \frac{1}{2} \log \frac{Q}{D(\gamma_1, \cdots, \gamma_{j-1},\bar 0)}\Bigr),
\end{align}
where $\bar 0$ is a vector of $N-j+1$ zeros, (a) follows from Lemma~\ref{lem:sum:bnd:N} with $\mu_1 = \cdots = \mu_N = \lambda$ and (b) follows from Lemma~\ref{lem:decreasing} and equation \eqref{eq:rsum:gami}. Expression \eqref{eq:txN:rate:alone} will imply that
\begin{align}
\sum_{i=1}^{j-1} \mu_i R_i \leq \sum_{i=2}^{j-1} (\mu_{i-1} - \mu_{i})  R_{sum}(\gamma_1,\cdots,\gamma_{i-1})
	\notag \\
	+ \mu_{j-1} R_{sum}(\gamma_1, \cdots, \gamma_{j-1}).
\end{align}
From the last two expressions, we get
\begin{multline} \label{eq:up:bnd:N}
\sum_{i=1}^N \mu_i R_i + \frac \lambda 2 \log \frac Q{D_n} \leq 
	 \mu_{j-1} R_{sum}(\gamma_1, \cdots, \gamma_{j-1})  \\
+ \sum_{i=2}^{j-1} (\mu_{i-1} - \mu_{i}) R_{sum}(\gamma_1, \cdots, \gamma_{i-1})  \\
+ \frac \lambda 2 \log \frac Q{D(\gamma_1, \cdots, \gamma_{j-1})} .
\end{multline}
\subsection{Equivalence of Inner and Outer bounds} \label{sec:equivalence:N}
Let us now show that the outerbound defines the same region as that can be achieved by
our communication scheme. We give a inductive argument. The base case of $N=1$ follows from \cite{sutivong2005channel}. Suppose the inner and outer
bounds given in the previous subsections are equivalent for $N-1$ users or lower. 
We will then show that the equivalence extends to $N$ users also.  Again, let us
consider different regimes for $\lambda$, $\mu_1\geq, \cdots, \geq \mu_N=1$.

For $\lambda > \mu_1$, notice that the outer bound corresponds to messages of zero rate and a 
distortion of $D(\bar 0)$.
Clearly, this can be achieved by each user sending a scaled source process, i.e. for $1 \leq k \leq N$,
\begin{align} \label{eq:pure:rate}
X_{ki} = \sqrt{\frac {P_{ki}}{Q}} S_i.
\end{align}
If $\mu_{j} \leq \lambda \leq \mu_{j-1}$, then the outerbound has terms of the form
$R_{sum}(\gamma_1, \cdots, \gamma_{l})$ with $l<j$. A natural communication choice is to 
set \eqref{eq:pure:rate} for users $j, \cdots, N$. Notice that the model now
is effectively a $j-1<N$ user problem, with a modified state process. By induction,
the inner and outerbounds coincide here. Thus, we are left with the case 
$0 \leq \lambda<1$.

Assume for simplicity that $\lambda>0$ and $\mu_i>0, 1\leq i \leq N$. Using Lemma~\ref{lem:concavityN} (Appendix~\ref{sec:app:concave}), we can
find a set of parameters $\gamma_1, \cdots, \gamma_N$, 
as the unique maximizer of the RHS in \eqref{eq:lemNuser},
for a given  $\mu_1,\cdots,\mu_N$ and $\lambda$. Clearly the achievable scheme
can get these rates by using a power split of $(1-\gamma_k)P$ and $\gamma_k P$ respectively between the
uncoded source transmission and message rate at user~$k$. This solves the $N-$ sender problem.

\section{Conclusion}\label{sec:concl}
In this paper, we considered joint message transmission and state estimation in a state dependent Gaussian multiple access channel. The optimal trade-off between the rates of the messages at two encoders and state estimation distortion was completely characterized. It was also shown that for  source and message communication over a GMAC without state, a strategy of uncoded communication at the nodes without any messages and power sharing between message transmission (using DPC) and state amplification at terminals that have a message to transmit in addition, turns out to be optimal.

The discrete memoryless MAC counterpart of the current model would be an interesting open problem for further investigations.

\appendices

\section{Proof of Lemma~\ref{lem:sum:bnd}} \label{sec:app:lem:sum}
By Fano's inequality~\cite{cover2012elements}, 
we can write for any $\epsilon > 0$ for large enough $n$
\begin{equation}
H(W_1,W_2|Y^n) \leq n\epsilon.
\end{equation}
Since the $n\epsilon$ terms do not affect our end results, we will
neglect these in the sequel.

\begin{align}
& n \mu R_1+nR_2+\frac{n \lambda}{2} \log\left(\frac{Q}{D_n}\right) \notag\\
&{=} n (\mu-1) R_1+n\sum_{i=1}^2 R_i 
	+\frac{n \lambda}{2} \log\left(\frac{Q}{D_n}\right) \notag\\
&\stackrel{(a)} \leq (\mu-1)H(W_1)+H(W_1,W_2)+\lambda I(S^n;Y^n) \notag\\
&\stackrel{(b)}= (\mu-1)H(W_1|X_2^n,S^n)+H(W_1,W_2|S^n)+\lambda I(S^n;Y^n) \notag\\
&\stackrel{(c)} \approx (\mu-1)I(W_1;Y^n|X_2^n,S^n) \notag\\
&\phantom{ww}+I(W_1,W_2;Y^n|S^n)+\lambda I(S^n;Y^n) \notag\\
&= (\mu-1)I(W_1;Y^n|X_2^n,S^n)+\lambda I(W_1,W_2,S^n;Y^n) \notag\\
&\phantom{www} +(1-\lambda)I(W_1,W_2;Y^n|S^n) \notag\\
&= (\mu-1)(h(Y^n|X_2^n,S^n)-h(Y^n|W_1,X_2^n,S^n)) \notag\\
&\phantom{www} +\lambda (h(Y^n)-h(Y^n|W_1,W_2,S^n)) \notag\\
&\phantom{www}+(1-\lambda)(h(Y^n|S^n)-h(Y^n|W_1,W_2,S^n)) \notag\\
&\leq \sum_{i=1}^n \{(\mu-1)(h(Y_i|X_{2i},S_i)-h(Z_i)) \notag\\
&\phantom{www} +\lambda h(Y_i)+(1-\lambda)h(Y_i|S_i)-h(Z_i)\}, \label{eq:weightsum}
\end{align}
where $(a)$ uses Lemma~\ref{lem:dist:MI}, $(b)$ follows since $(W_1,W_2) \independent S^n$ and $W_1 \independent X_2^n$, and  $(c)$  follows from Fano's inequality. Let us now upper bound the term 
$\lambda h(Y_i)+(1-\lambda)h(Y_i|S_i)$ in \eqref{eq:weightsum}. 
Notice that $\lambda h(Y_i)+(1-\lambda)h(Y_i|S_i),\:\: \lambda \in [0,1]$ is simultaneously maximized (for fixed $K_i(X_{1i},X_{2i},S_i)$) when $(X_{1i}+X_{2i})$ is jointly Gaussian with $S_i$~\cite{sutivong2005channel}.
Without loss of generality, for the purposes of finding an upper bound on $\mu R_1+R_2+\frac{\lambda}{2} \log \frac{Q}{D_n}$, we can express, using \eqref{eq:ki:entry}
\begin{align*}
\sum_{j=1}^2X_{ji} =V_i+\left(\eta_{1i} \sqrt{(1-\gamma_i)\frac{P_{1i}}Q} 
	+ \eta_{2i}\sqrt{(1-\beta_i)\frac{P_{2i}}Q}
	\right)S_i,
\end{align*}
where  $V_i$ is  zero mean Gaussian, independent of  $S_i$.
The second term on the RHS can be understood as the linear estimate of $(X_{1i}+X_{2i})$ 
given $S_i$. 
Since $V_i$ and $S^n$ are 
independent,
\begin{align*}
\Var[V_i] &= \Var (X_{1i}+ X_{2i}|S^n) \\
	  &\leq \Var (X_{1i}|S_i) + \Var(X_{2i}|S_{i}) \\
	  &= \gamma_i P_{1i} + \beta_i P_{2i}.
\end{align*}
%
Using this
\begin{align*}
\eE{[(X_{1i}+X_{2i})]^2} \leq P_{1i}+P_{2i}+2\sqrt{(1-\gamma_i)(1-\beta_i)P_{1i}P_{2i}},
\end{align*}
where we have taken $\eta_{1i}=\eta_{2i}=1$ as the sign of correlation in \eqref{eq:ki:entry},
as negative correlation can only be detrimental for the RHS.
Now on denoting $g(x)=(1/2)\log(2{\pi}e x)$ and using the differential entropy maximizing property of Gaussian random variables for a given variance, we can write
\begin{align}
h(Y_i|X_{2i},S_i) &\leq h(X_{1i}+Z_i|S_i) \notag\\
&\leq g(\Var{[X_{1i}|S_i]}+\vZ) = g(\gamma_i P_{1i}+\vZ) \label{eq:tx1:rate}.
\end{align}
\begin{align}
h(Y_i|S_i) &= h(V_i+Z_i) \leq g(\gamma_i P_{1i}+\beta_i P_{2i}+\vZ), \label{entr3}\\
h(Y_i) &\leq g(Q\!+\!\vZ\!+\!\eE{[X_{1i}+\!X_{2i}]^2}\!+\!2\eE{[(X_{1i}\!+\!X_{2i})S_i]}) \notag\\
&\leq g(P_{1i}+P_{2i}+Q+\vZ \notag\\
&\phantom{www}+2\sqrt{\bar{\gamma}_i P_{1i}Q} +2\sqrt{\bar{\beta}_i P_{2i}Q}+2\sqrt{\bar{\gamma}_i \bar{\beta}_i P_{1i}P_{2i}}) \label{entr4},
\end{align}
where \eqref{entr3} is under the choice of $X_{1i}+X_{2i}$ which maximizes $\lambda h(Y_i)+(1-\lambda) h(Y_i|S_i)$, for all $\lambda \in [0,1]$.
Continuing the chain of inequalities from \eqref{eq:weightsum}:
\begin{align}
& n \mu R_1+nR_2+\frac{n \lambda}{2} \log\left(\frac{Q}{D_n}\right) \notag\\
& \stackrel{(a)}\leq \sum_{i=1}^n (\mu-1) \frac{1}{2} \log  \left(\frac{\gamma_i P_{1i}+\vZ}{\vZ}\right) \notag\\
&\phantom{w} + \sum_{i=1}^n \frac{\lambda}{2} \log \! \left(\!\!\left(
\begin{aligned}
&P_{1i}+P_{2i}+Q+\vZ+2\sqrt{\bar{\gamma_i}P_{1i} Q}\\
&+2\sqrt{\bar{\beta_i}P_{2i} Q}+2\sqrt{\bar{\gamma_i}\bar{\beta_i}P_{1i} P_{2i}}
\end{aligned}
\!\right) \!\!
\Bigg / \!\! \vZ\right) \notag\\
&\phantom{w} +\sum_{i=1}^n \frac{(1-\lambda)}{2} \log\left(\frac{\gamma_i P_{1i}+\beta_i P_{2i}+\vZ}{\vZ}\right) \notag\\
&\stackrel{(b)} \leq (\mu-1) \frac{n}{2} \log\left(\frac{\gamma P_1+\vZ}{\vZ}\right) \notag\\
&\phantom{w} +\frac{\lambda n}{2}\log\left(\left(
\begin{aligned}
&P_{1}+P_{2}+Q+\vZ+2\sqrt{\bar{\gamma}P_{1} Q}\\
&+2\sqrt{\bar{\beta}P_{2} Q}+2\sqrt{\bar{\gamma}\bar{\beta}P_{1} P_{2}}
\end{aligned}
\right)
\Bigg/ \vZ \right) \notag\\
&\phantom{w} +\frac{n(1-\lambda)}{2} \log\left(\frac{\gamma P_{1}+\beta P_{2}+\vZ}{\vZ}\right) \notag\\
&= (\mu-1) \frac{n}{2} \log\left(\frac{\gamma P_1+\vZ}{\vZ}\right) \notag\\
&\phantom{w} + \frac{\lambda n}{2}\log \! \left(\!\!\left(
\begin{aligned}
&P_{1}\!+\!P_{2}\!+\!Q+\!\vZ\!+\!2\sqrt{\bar{\gamma}P_{1} Q}\\
&\!+\!2\sqrt{\bar{\beta}P_{2} Q}\!+\!2\sqrt{\bar{\gamma}\bar{\beta}P_{1} P_{2}}
\end{aligned}
\right) \!\!
\Bigg/ \!\! \left(\!
\begin{aligned}
&\vZ\!+\!\gamma P_1 \\
&\phantom{w}\!+\!\beta P_2
\end{aligned}
\!\right) \!\!\right) \notag\\
&\phantom{w} +\frac{n}{2} \log\left(\frac{\gamma P_{1}+\beta P_{2}+\vZ}{\vZ}\right) \notag\\
&\stackrel{(c)} = (\mu-1) \frac{n}{2} \log\left(\frac{\gamma P_1+\vZ}{\vZ}\right) \notag\\
&\phantom{ww}+ nR_{sum}(\gamma,\beta)+\frac{\lambda n}{2} \log \left(\frac{Q}{D(\gamma,\beta)}\right). \label{convfs}
\end{align}
where (a) follows from the fact that both $\lambda$ and $(1-\lambda)$ are non-negative for $\lambda \in [0,1]$ and expression \eqref{entr3} and \eqref{entr4}, (b) follows from Jensen's Inequality, and (c) follows from the definitions of $R_{sum}(\gamma,\beta)$, $D(\gamma,\beta)$ from \eqref{eq:rc3} and \eqref{eq:distc}. Thus 
the lemma is proved for all $\lambda \in [0,1]$. 
From \eqref{eq:tx1:rate}, we also get
\begin{align} \label{eq:rate:alone:11}
R_1 \leq \frac 12 \log(1+ \gamma P_1).
\end{align}


\section{Proof of Lemma \ref{lem:concave}} \label{sec:app:concave}
Here we prove a slightly more general result, which turns out useful for the
remaining sections. Consider a  concave function $L(\nu),\nu \in [0,1]^N$, and let us define
\begin{align} \label{eq:obj:conc}
f(\nu) := \sum_{i=1}^N \alpha_i \log(1+ \sum_{j=1}^i \nu_j P_j) + \frac \lambda 2 \log L(\nu),
\end{align}
where $\alpha_i, 1\leq i \leq N$ and $\lambda$ are non-negative constants.
\begin{lemma} \label{lem:concavityN}
For $0<\lambda \leq 1$, $f(\cdot)$ is strictly concave in $\nu
\in (0,1)^N$, whenever $\alpha_i, 1 \leq i\leq N$ are not identically zero.
\end{lemma}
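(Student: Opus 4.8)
The plan is to get concavity out of the way first — it is routine — and then spend the effort upgrading it to \emph{strict} concavity. For concavity: since $L$ is concave and strictly positive on the open cube and $t\mapsto\log t$ is concave and increasing, $\log\circ L$ is concave, i.e. $\nabla^{2}\log L=\tfrac{1}{L}\nabla^{2}L-\tfrac{1}{L^{2}}\nabla L\,\nabla L^{\T}\preceq 0$ since $\nabla^2 L\preceq 0$ and the outer-product term is $\preceq 0$; and each $\alpha_i\log\!\big(1+\sum_{j\le i}\nu_jP_j\big)$ is a nonnegative multiple of the log of a positive affine form, hence concave. Writing $t_i(\nu)=1+\sum_{j\le i}\nu_jP_j$ and $p_i=(P_1,\dots,P_i,0,\dots,0)^{\T}\in\Real^{N}$, this gives
\[
\nabla^{2}f(\nu)=-\frac{1}{\ln 2}\sum_{i=1}^{N}\frac{\alpha_i}{t_i(\nu)^{2}}\,p_ip_i^{\T}\;+\;\frac{\lambda}{2}\,\nabla^{2}\log L(\nu),
\]
a sum of negative semidefinite matrices, so $f$ is concave on $(0,1)^{N}$.

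For strict concavity I would show $\nabla^{2}f(\nu)\prec 0$ by checking that no nonzero $v$ lies in the common kernel of the summands. If $v^{\T}\nabla^{2}f(\nu)v=0$, each negative semidefinite piece must vanish on $v$, so $p_i^{\T}v=0$ for every $i$ with $\alpha_i>0$, and from $\tfrac{1}{L}v^{\T}\nabla^{2}L\,v-\tfrac{1}{L^{2}}(\nabla L^{\T}v)^{2}=0$ with both terms $\le 0$, simultaneously $v^{\T}\nabla^{2}L(\nu)v=0$ \emph{and} $\nabla L(\nu)^{\T}v=0$ (equivalently $L$ is affine, hence constant, along the line $\nu+tv$). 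The first observation already settles the principal case: if $\alpha_i>0$ for all $i$, then $\{p_i\}$ is lower triangular with diagonal entries $P_i\ne 0$, hence a basis of $\Real^{N}$, so $p_i^{\T}v=0$ for all $i$ forces $v=0$; strict concavity then follows from the log-of-affine part alone, and this is exactly the situation used in Lemma~\ref{lem:concave} (with $\mu>1$, $\lambda<1$) and in the analogous $N$-user bounds whenever the weighted $p_i$'s span. In the remaining case some $\alpha_i$ vanish and $v$ is only confined to $\mathcal V=\{v:p_i^{\T}v=0\ \forall i\ \text{with}\ \alpha_i>0\}$, and on $\mathcal V$ I must exploit that $\log L$ is still strictly concave, i.e. that no nonzero $v\in\mathcal V$ keeps $L$ constant along a line. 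Here the explicit form enters: in the theorems $L$ is the numerator $\vZ+\sum_j\gamma_jP_j+\big(\alpha\sqrt Q+\sum_j\sqrt{(1-\gamma_j)P_j}\big)^{2}$ of $Q/D$ from \eqref{eq:distc:N} (after merging the $R_{sum}$ term with $\tfrac{\lambda}{2}\log\tfrac{Q}{D}$, which leaves a $\tfrac{1-\lambda}{2}\ge 0$ share of $\log(\vZ+\sum_j\gamma_jP_j)$ in the log-of-affine part), and it is built from the concave pieces $\sqrt{1-\gamma_j}$ and the geometric means $\sqrt{(1-\gamma_j)(1-\gamma_k)}$; each such piece is $1$-homogeneous in $(1-\gamma_1,\dots,1-\gamma_N)$ and is affine only along the single ray in that direction, along which its gradient does not vanish, so $L$ cannot be constant along any line of the cube. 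That contradiction yields $v=0$, hence strict concavity.

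\textbf{Main obstacle.} The hard part is precisely this second case: stated for an abstract concave $L$, strict concavity in the directions where the log-of-affine sum is flat is not automatic — one really needs that the kernel of $\nabla^{2}L$ and the zero set of $\nabla L$ meet only at $0$ inside $\mathcal V$. I would therefore organize the write-up as (1) concavity; (2) the spanning/basis argument when $\{i:\alpha_i>0\}$ is all of $[1:N]$; and (3) for the degenerate directions, reduce to the one-dimensional statement ``$g(t):=L(\nu+tv)$ concave and non-constant $\Rightarrow$ $(\log g)''(0)<0$'' and verify non-constancy directly from the square-root/geometric-mean structure in \eqref{eq:distc:N}, using that those terms are strictly concave in each $\gamma_j$ and that their gradients are not orthogonal to $\mathcal V$.
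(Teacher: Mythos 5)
Your proposal is correct and takes a genuinely more careful route than the paper's own proof, which consists of two asserted steps: the first term, ``being the linear combination of logarithms, is strictly concave,'' and $\log L$ is concave because $\log$ is increasing and concave while $L$ is concave; the sum is then declared strictly concave. Your Hessian decomposition recovers exactly this in the principal case, but your kernel analysis also exposes what the paper glosses over: $\sum_i\alpha_i\log\bigl(1+\sum_{j\le i}\nu_jP_j\bigr)$ is strictly concave on $(0,1)^N$ only when the vectors $p_i$ with $\alpha_i>0$ span $\Real^N$ (e.g.\ when every $\alpha_i>0$), whereas the lemma's hypothesis asks only that the $\alpha_i$ be not identically zero. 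In the remaining directions your step (3) is genuinely needed --- for an abstract concave $L$ (say a constant) the statement is simply false, so strictness must come from the square-root and geometric-mean structure of the concrete $L$ in \eqref{eq:L:nu} and \eqref{eq:distc:N}, which the paper's proof never invokes. This is not an academic worry: in Lemma~\ref{lem:concave} with $\mu=1$ (the dominant face used in Section~\ref{sec:equivalence}) only the sum-rate vector $(P_1,P_2)$ is active, and strict concavity along $(P_2,-P_1)$ comes entirely from $L$, via precisely your observation that $v^{\T}\nabla^2L\,v=0$ and $\nabla L^{\T}v=0$ cannot hold simultaneously for $v\neq 0$ (along the radial direction of the geometric-mean terms the gradient term is nonzero; along every other direction the Hessian term is strictly negative). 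In short, your ``main obstacle'' is a gap in the paper's proof rather than in yours; what remains to tighten in your write-up is only the routine verification of step (3) for general $N$, where several geometric-mean terms must be shown not to be simultaneously flat with vanishing directional gradient.
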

\begin{IEEEproof}
The first term, being the linear combination of logarithms, is strictly concave. Let us consider the second term. Let $x_1$ and $x_2$ be two $N-$ dimensional vectors in $\mathbb R^N$. Notice that for $\zeta \in [0,1]$,
\begin{align}
\zeta \log L(x_1) + (1\!-\!\zeta) \log L(x_2) 
&\leq \log(\zeta L(x_1) \!+\! (1\!-\!\zeta)L(x_2)) \notag \\
&\leq \log L(\zeta x_1 + \zeta x_2),
\end{align}
since $L(\cdot)$ itself is concave by assumption.  This proves the lemma.
\end{IEEEproof}
Let us proceed to show Lemma~\ref{lem:concave}. Denote
\begin{align} \label{eq:L:nu}
L(\gamma,\beta) &= P_1+P_2+Q+\vZ+2\sqrt{\bar{\gamma} P_1 Q} \notag\\
&\phantom{w}+2\sqrt{\bar{\beta} P_2 Q}+2\sqrt{\bar{\gamma} \bar{\beta} P_1 P_2},
\end{align}
for convenience. 
Rewriting the function given in lemma, 
\begin{align}
f(\gamma,\beta) &= \frac{(\mu-1)}{2} \log \left(1+\frac{\gamma P_1}{\vZ}\right) \notag\\
&\phantom{w}+\frac{(1-\lambda)}{2} \log \left(\!1\!+\!\frac{\gamma P_1\!+\!\beta P_2}{\vZ}\!\right)+\frac{\lambda}{2} \log \left(\!\frac{L(\gamma,\beta)}{\vZ}\!\right).
\end{align}
Thus $f(\gamma,\beta)$ is a sum similar to \eqref{eq:obj:conc}. Our proof will be complete by showing $L(\cdot)$ in \eqref{eq:L:nu} to be a concave function.
For $c_0>0$, and non-negative constants $c_1, \cdots, c_N$, the function
$$
L(\nu) = c_0 + \sum_{i=1}^N \sum_{j=1}^N 
c_i c_j \sqrt {(1-\nu_i)(1-\nu_j})
$$
is concave. To see this, notice that $\sqrt{x}$ is strictly concave in $x\geq 0$. 
Also, $\sqrt{xy}$ is jointly concave in $(x,y) \in [0,1]^2$, making $L(\nu)$ a concave function. 
Notice that concavity in the range of interest is maintained by replacing each and every
 variable  $x\in[0,1]$ by $1-x$.
The proof of the lemma is now complete.

\section{Proof of Lemma \ref{lem:sum:bnd:N}}~\label{sec:app:lem:sum:N}
Recall that $\mu_1\geq \mu_2\geq, \cdots, \geq \mu_N=1$. We need to show \eqref{eq:lemNuser} for $\lambda \leq 1$. Let us define $\mu_{N+1}:=0$. 
Now
\begin{align}
& n \sum_{j=1}^{N} \mu_j R_j +\frac{n \lambda}{2} \log\left(\frac{Q}{D_n}\right) \notag\\
&= n \sum_{j=2}^{N+1} (\mu_{j-1}-\mu_j) \sum_{i=1}^{j-1} R_i  +\frac{n \lambda}{2} \log\left(\frac{Q}{D_n}\right) \notag\\
&=  \sum_{j=2}^{N+1} (\mu_{j-1}-\mu_j) H(W^{j-1})  +\frac{n \lambda}{2} \log\left(\frac{Q}{D_n}\right) \notag\\
&=  \sum_{j=2}^{N} (\mu_{j-1}-\mu_j) H(W^{j-1})  + H(W^N) + \frac{n \lambda}{2} \log\left(\frac{Q}{D_n}\right) \notag\\
&\stackrel{(a)}{\leq}  \sum_{j=2}^{N} (\mu_{j-1}-\mu_j) I(W^{j-1};Y^n|S^n, X_{j}^n,\cdots, X_{N}^n) \notag \\
&\phantom{ww} + I(W^N;Y^n|S^n) + \frac{n \lambda}{2} \log\left(\frac{Q}{D_n}\right) \notag\\
&\leq  \sum_{j=2}^{N} (\mu_{j-1}-\mu_j) I(W^{j-1};Y^n|S^n, X_{j}^n,\cdots, X_{N}^n) \notag \\
&\phantom{ww} + I(W^N;Y^n|S^n) + \lambda I(S^n;Y^n)\notag\\
&= \sum_{j=2}^{N} (\mu_{j-1}-\mu_j) I(W^{j-1};Y^n|S^n, X_{j}^n,\cdots, X_{N}^n) \notag \\
&\phantom{ww} + (1-\lambda) I(W^N;Y^n|S^n) + \lambda I(S^n W^N;Y^n)\notag\\
&= \sum_{j=2}^{N} (\mu_{j-1}-\mu_j) (h(Y^n|S^n, X_{j}^n,\cdots, X_{N}^n) - h(Z^n)) \notag \\
&\phantom{ww} + (1-\lambda) \left(h(Y^n|S^n) - h(Z^n)\right) + \lambda \left(h(Y^n) - h(Z^n)\right)\notag\\
&= \sum_{j=2}^{N} (\mu_{j-1}-\mu_j) \sum_{i=1}^n (h(Y_i|S^n , X_{ji},\cdots, X_{Ni}) - h(Z_i)) \notag \\
&\phantom{i} + (1-\lambda) \sum_{i=1}^nh(Y_i|S_i)  + \lambda \sum_{i=1}^n h(Y_i) - \sum_{i=1}^n h(Z_i)
\label{eq:N:up:1}
%
\end{align}
where (a) follows since $W^N \independent S^n$, $W^{j-1} \independent (S^n,X_j^n,\cdots, X_N^n)$ and Fano's inequality. 

Let us now consider the covariance matrix $K_i$ of $(X_{1i}, \cdots, X_{Ni}, S_i)$. For some 
$\gamma_{ki} \in [0,1]$ we can write
\begin{align}
\eE X_{ki}S_i = \eta_{ki} \sqrt{(1-\gamma_{ki}) P_{ki}Q},
\end{align}
where $P_{ki} = K_i(k,k), 1 \leq k \leq N$ is the empirical average power of user~$k$ 
for transmission index $i$ in a block, and $\eta_{ki} \in \{-1,+1\}$ is the sign of
the correlation. From this, we also get
$$
\Var(X_{ki}|S_i) = \gamma_{ki} P_{ki}.
$$
Let us  now bound the entropy terms in \eqref{eq:N:up:1}. 
\begin{align}
h(Y_i|S^n, X_{ji}, \cdots, X_{Ni}) &\leq g(\Var(\sum_{k=1}^{j-1}X_{ki}+ Z_i|S^n)) \notag \\
	&\leq  g(\sum_{k=1}^{j-1} \Var(X_{ki}+ Z_i|S^n)) \notag \\
	&\leq  g(\sum_{k=1}^{j-1} \Var(X_{ki}|S_i) + \vZ) \notag \\
	&\leq g(\vZ + \sum_{k=1}^{j-1} \gamma_{ki} P_{ki}), \label{eq:txN:rate}
\end{align}
where the first two expressions used the Markov condition 
$X_{ki}\rightarrow S^n \rightarrow X_{ji}$. Observe that $\lambda h(Y_i)+(1-\lambda)h(Y_i|S_i),\:\: \lambda \in [0,1]$ would be maximized (for fixed covariance $K_i$) when $(X_{1i}+\cdots+X_{Ni})$ is jointly Gaussian 
with $S_i$. 
Let us write
\begin{align}
\sum_{j=1}^n X_{j_i} = V_i+ \left(\sum_{j=1}^N \eta_{ji} \sqrt{(1-\gamma_{j_i})P_{j_i}/Q} \right)S_i, \label{eq:jointgaussN},
\end{align}
where $V_i$ is a zero mean Gaussian with  $\Var V_i \leq \sum_{j=1}^N \gamma_{ji} P_{ji}$, 
and $V_i \independent  S_i$. Since $\forall j, \, |\eta_{ji}|=1$, using  
\begin{align*}
\eE{\left[\sum_{j=1}^n X_{j_i}\right]^2} \!\!\!\! \leq \sum_{j=1}^N \!\! P_{ji}\!+2 \sum_{j=1}^N \sum_{k \neq j} \sqrt{(1-\gamma_{j_i})(1-\gamma_{k_i})P_{ji}P_{ki}}.
\end{align*}
Putting these altogether
\begin{align}
h(Y_i|S_i) &= h(V_i+Z_i) \leq g\left(\sum_{j=1}^N \gamma_{ji} P_{ji} +\vZ\right), \label{eq:entr3N}\\
h(Y_i) &\leq g\left(\sum_{j=1}^N P_{ji}+\alpha^2 Q+\vZ+2 \alpha \sum_{j=1}^N \sqrt{\bar{\gamma}_{ji} P_{ji} Q} \right.\notag\\
&\phantom{wwwww} \left.  + 2 \sum_{j=1}^N \sum_{k \neq j} \sqrt{\bar{\gamma}_{ji} \bar{\gamma_k}_i P_{ji}P_{ki}} \right) \label{eq:entr4N}.
\end{align}
Defining $\gamma_k = \frac 1{nP_k} \sum_{i=1}^n \gamma_{ki} P_{ki}$, 
for $\lambda \in (0,1)$, we get from \eqref{eq:N:up:1}, \eqref{eq:txN:rate} -- \eqref{eq:entr4N}  
\begin{align}
  \sum_{j=1}^{N} \mu_j R_j &+\frac{ \lambda}{2} \log\left(\frac{Q}{D_n}\right) \notag\\
&\leq  \sum_{j=2}^{N} (\mu_{j-1}-\mu_j) g(\vZ + \sum_{k=1}^{j-1} \gamma_k P_k) \notag \\
&\phantom{ww} + (1-\lambda)  g(\vZ + \sum_{j=1}^N \gamma_j P_j )   \notag \\
&\phantom{ww}+ \lambda g\left(\sum_{j=1}^N\!P_{j}+\alpha^2 Q+\vZ+2\alpha \sum_{j=1}^N\!\! \sqrt{\bar{\gamma_j} P_{j} Q} \right.\notag\\
&\phantom{www} \left.  + 2 \sum_{j=1}^N \sum_{k \neq j} \sqrt{\bar{\gamma_j} \bar{\gamma_k} P_{j}P_{k}} \right) - \mu_1 g(\vZ).
\end{align}
Here we used Jensen's inequality on \eqref{eq:txN:rate}, \eqref{eq:entr3N} and \eqref{eq:entr4N} 
to get a single letter form independent of the transmission index.
This proves Lemma~\ref{lem:sum:bnd:N}. Note that for $S \subseteq \{1,\cdots,N\}$, by giving $(S^n,X_{j \in \mathcal{S}^c}^n)$ to the receiver and using Jensen's inequality, we obtain (by following similar steps as \eqref{eq:txN:rate}),
\begin{align} \label{eq:txN:rate:alone}
n \sum_{j\in S} R_j &\leq I(X_{j \in \mathcal{S}}^n;Y^n|S^n,X_{j \in \mathcal{S}^c}^n) \notag\\
&= h(Y^n|S^n,X_{j \in \mathcal{S}^c}^n)-h(Z^n) \notag\\
&\leq \frac {n}{2} \log (1 + \frac{\sum_{j\in S} \gamma_j P_j}{\vZ}), \forall S \subseteq \{1,\cdots,N\}.
\end{align}

\bibliographystyle{IEEEtran}
\bibliography{mybib}




\end{document}